\def\namedlabel#1#2{\begingroup
    #2%
    \def\@currentlabel{#2}%
    \phantomsection\label{#1}\endgroup
}
\newtheorem{theorem}{Theorem}[section]
\theoremstyle{remark}
\providecommand{\customgenericname}{}
\newcommand{\newcustomtheorem}[2]{%
  \newenvironment{#1}[1]
  {%
   \renewcommand\customgenericname{#2}%
   \renewcommand\theinnercustomgeneric{##1}%
   \innercustomgeneric
  }
  {\endinnercustomgeneric}
}
\title{A Recursive Measure of Voting Power\\ that Satisfies Reasonable Postulates\\[.2cm]}
\author{A. Abizadeh\thanks{Department of Political Science, McGill University: {\tt arash.abizadeh@mcgill.ca}} 
\and A. Vetta\thanks{Department of Mathematics \& Statistics and School of Computer Science, McGill University: {\tt adrian.vetta@mcgill.ca}}}
\begin{document}

\maketitle

\begin{abstract}
We design a recursive measure of voting power based on partial as well as full voting efficacy.
Classical measures, by contrast, incorporate solely full efficacy.
We motivate our design by representing voting games using a division lattice
and via the notion of random walks in stochastic processes,
and show the viability of our recursive measure by proving it satisfies a plethora of postulates that any reasonable voting measure should satisfy.
These include the iso-invariance, dummy, dominance, donation, minimum-power bloc, and quarrel postulates.
\end{abstract}


\section{Introduction}\label{sec:intro}

There have been two approaches to justifying proposed measures of voting power.
The first seeks to identify a set of reasonable axioms that uniquely pick out a single measure of voting power.
To date this {\em axiomatic} approach has proved a failure:
while many have provided axiomatic characterizations of various measures,
that is, the set of axioms the measures uniquely satisfy,
no one has done so for a set of axioms all of which are independently justified.
In other words, no one has succeeded in showing why it would be reasonable to expect a measure of voting power
to satisfy the entire set of axioms that uniquely pick out a proposed measure.
For example, \citet{Dub75} and \citet{DubS79} have characterized
the classic Shapely-Shubik index ($SS$) and Penrose-Banzhaf measure ($PB$)
as uniquely satisfying a distinct set of axioms, respectively,
but, as critics have noted, several of the axioms lack proper justification:%
\footnote{For the introduction of these measures, see \citet{ShaS54} for $SS$ and \citet{Pen46,Ban65,Ban66} for $PB$.
See also \citet{FelM98, FelM04, LarV08}.}
the {\em additivity} (or {\em transfer}) postulate that both share is unmotivated,
and the postulates distinguishing the two ({\em efficiency} for $SS$ and {\em total power} for $PB$)
are either unnecessary or ad hoc (\citealp[pp.~292-96]{Str82}, \citealp[pp. 194-5]{FelM98}, \citealp{LarV01}).

The second, {\em two-pronged} approach is more modest and involves combining two prongs of justification.
The first prong is to motivate a proposed measure on conceptual grounds,
showing the sense in which it captures the core features of the concept of voting power.
With this conceptual justification in place,
the second prong then requires showing that the measure satisfies
a set of postulates we should expect any reasonable measure of voting power to satisfy.
For the more modest approach, both prongs of justification are necessary:
on the one hand, because more than one measure may satisfy the set of reasonable axioms,
we must turn to conceptual justification to adjudicate between competing proposals;
on the other hand, any violations of reasonable postulates count against a measure
regardless of how intuitive an interpretation can be provided for its conceptual meaning.
Thus, for this two-pronged approach, the satisfaction of reasonable postulates serves, not to pick out a uniquely reasonable measure,
but to rule out unreasonable measures.

The first prong of justification has been typically carried out in {\em probabilistic} terms.
For example, the {\em a priori} Penrose-Banzhaf measure equates a player's voting power, in a given voting structure, 
with the proportion of logically possible {\em divisions} or complete vote configurations 
in which the player is (fully) {\em decisive} for the division outcome,
i.e., in which the player has an alternative voting strategy such that, if it were to choose that alternative 
instead, the outcome would be different (holding all other votes constant).
The standard interpretation is that the a priori $PB$ measure represents the probability a player will be decisive under the 
assumptions of {\em equiprobable voting} (the probability a player votes for an alternative is equal to the probability it 
votes for any other) and {\em voting independence} (votes are not correlated),
which together imply
{\em equiprobable divisions} \citep[pp. 37-38]{FelM98}.
(The equiprobable-divisions assumption is supposed to model a priori power because the latter refers to voting power solely
in virtue of the formal voting structure, abstracted from the distribution of preferences.)
The classic {\em a priori} $PB$ measure is a special case of a generalized measure that weights a player's decisiveness in 
each division by that division's probability;
the generalized measure therefore represents the probability a player will
be decisive given some probability distribution for the divisions.
If each division is weighted by its actual ex ante probability --
given the actual distribution of players' preferences and the potential effects of strategic considerations on voting behaviour --
then the generalized measure yields a measure of so-called {\em a posteriori} voting power.%
\footnote{On the distinction between {\em a priori} and {\em a posteriori} voting power, see \citet{FelM03, FelM04}.}

Thus the first prong of justification for $PB$ relies on showing the intuitive plausibility of equating voting power with the probability of decisiveness,
by arguing in favour of equating the notion of having efficaciously exercised power to effect an outcome with that of being decisive for it.
Similarly, $SS$ has been interpreted in probabilistic terms as the probability
a player will be decisive if players share a common standard by which they judge the desirability of alternatives,
which can be formalized as the probability of decisiveness given a probability distribution of divisions
resulting from ``homogeneous'' voting behaviour, that is,
if the probability any player votes for some arbitrary alternative is the same for all players
and selected from a uniform distribution on [0,1] \citep{Owe75,Str77,Lee90,LarV05b}.
On this probabalistic interpretation, $SS$ is not a measure of a priori voting power,
but of a posteriori voting power under such a homogeneous probability distribution assumption.%
\footnote{As an a priori index, by contrast, $SS$ has been interpreted as measuring the relative {\em value}
of a player's a priori voting power, and therefore, for example, as a bribe index \citep{Mor02},
or as measuring the player's expected payoff assuming a cooperative game with transferable utility \citep{FelM98}.}

However, measures of voting power based exclusively on the ex ante probability of decisiveness
suffer from a crucial conceptual flaw.
The motivation for basing a measure of voting power on this notion is that decisiveness
is supposed to formalize the idea of a player {\em making a difference} to the outcome.
To equate a player's voting power with the player's ex ante probability of being decisive
is to assume that if any particular division were hypothetically to occur,
then the player would have efficaciously exercised power to help produce the outcome ex post if and only if
that player would have been decisive or necessary for the outcome.
Yet this assumption is false: sometimes an actor has efficaciously exercised its power to effect an outcome ex post, and,
through the exercise of that power, made a causal contribution to the outcome,
even though the actor's contribution was not decisive to it.%
\footnote{We presuppose a notion of active power as a conditional-dispositional property \citep{Mor02}.}

This is the case, for example, for {\em causally overdetermined} outcomes.
Consider a three-player vote under majority rule. In a unanimous 3-0 {\sc yes}-vote, no single individual 
player is (fully) decisive for the outcome:
for any player, even if that player had voted {\sc no}, the {\sc yes}-outcome would have remained intact.
Yet it would be a mistake to conclude that, because no single player has ``made a difference'' to the outcome,
in the sense of being decisive,
none has, by exercising its voting power, helped to cause it.
The notion of exercising power to effect an outcome is broader than the notion of making a difference
(or being decisive).
More specifically, reducing voting power to the ex ante probability of being decisive fails to take into account
players' {\em partial} causal efficacy in producing outcomes ex post.
This failure is why $PB$ interprets each individual player, in the unanimous 3-0 division,
as not having efficaciously exercised any voting power ex post at all --
even though in fact each player causally contributes and hence is partially efficacious in realizing it.
Decisiveness measures of voting power falter on the first, conceptual prong of justification.%
\footnote {For more extensive defence of this point, see \citet{Abi22}.
On partial causation, see also \citet{Wri85,Wri88,McD95,Ram97,Hit01,HalP05,Hal07,BraH09}.
It might be objected that overdetermined outcomes may be caused by the mereological sum of individuals, rather than by any of the individuals in particular \citep[pp. 181-2]{Lew86, Bar02}.
But as \citet{Sch03} has argued, it is wholly implausible to attribute emergent causal properties to a collective none of whose individual members plays a causal role.}

In this paper, we design a {\em Recursive Measure} ($RM$) of voting power that remedies this shortcoming,
by taking into account partial efficacy or degrees of causal efficacy.
To ask whether a player would have been decisive or {\em fully efficacious} if various divisions were to have occurred
is to ask a set of hypothetical questions about what would counterfactually be the case if a given vote configuration were to arise.
Similarly, to ask whether a player would have been {\em partially efficacious} within a particular division,
we pose a further series of {\em nested} hypothetical questions counterfactualizing about that division itself.
For example, for any division whose outcome is causally overdetermined, we ask:
Would the player have been decisive if a division with the same outcome had occurred
that was identical except that one player who voted in favour of the outcome were to vote against it?
And in what proportion of such doubly counterfactualized, outcome-preserving divisions would the player be decisive?
In the unanimous 3-0 division under majority rule,
there are three such doubly counterfactualized divisions, each of which preserves the {\sc yes} outcome by 2-1.
And each {\sc yes}-voter in the 3-0 division would be decisive in two of these three hypothetical divisions.
This yields a measure of the player's partial efficacy in the unanimous division (namely, $\frac{2}{3}$).
And if the doubly counterfactualized divisions are themselves causally overdetermined,
then we must of course recursively iterate the calculation for them.

This is how $RM$ is constructed, which is why we call it the Recursive Measure.
A full conceptual justification of such a measure -- i.e., the first prong of justification on the more modest approach --
is given in \citet{Abi22}.
The key to this justification lies in the fact that $RM$ does not reduce the efficacious exercise of voting power to being decisive;
the measure is grounded, rather, in the broader notion of causal efficacy.
$RM$ represents, not the {\em probability} a player will be decisive for the division outcome (the probability the player 
will be {\em fully causally efficacious} in bringing it about) but, rather, the player's {\em expected efficacy}, that is, the probability the 
player will make a causal contribution to the outcome weighted by the degree of causal efficacy.
Whereas decisiveness measures such as $PB$ solely track full efficacy, $RM$ tracks partial efficacy as well.

Yet however strong the conceptual justification for such a measure in general,
we also need to justify its {\em specific} construction or formulation.
Moreover, no matter how intuitively plausible, and no matter how justified its conceptual foundations,
$RM$ would not be a viable measure of voting power unless it also satisfied a number of postulates that arguably any reasonable measure ought to satisfy.
The more modest approach accordingly requires supplementing the first,
conceptual prong of justification with the second, postulate-satisfaction prong.
Our task in this paper is therefore two-fold:
first, to justify the specific formulation we give to the Recursive Measure;
and second, to furnish the second prong of justification given this formulation.
In particular, we take it that any reasonable measure of a priori voting power should satisfy, for simple voting games,
the {\em iso-invariance}, {\em dummy}, {\em dominance}, {\em donation}, {\em minimum-power bloc}, and {\em quarrel} postulates.
We here explain the intuitive justification for each of these voting-power postulates,
and then prove that $RM$ satisfies them for a priori power in simple voting games.
Moreover, we prove these by introducing a new way of representing voting games using a division lattice.

\section{The Voting Model}\label{sec:model}

In this section we present voting games and, in particular, the class of simple voting games
ubiquitous in the literature. We then explain how voting games can be represented by the division lattice.
This lattice representation will be used in Section~\ref{sec:recursive} to design a recursive measure
of voting power that incorporates partial causal efficacy.

\subsection{Simple Voting Games}\label{sec:simple}
Let $[n]=\{1,2,\dots, n\}$ be a nonempty, finite set of players
with two strategies, voting {\sc yes} or voting {\sc no}, 
and let $\mathcal{O}$=\{{\sc yes}, {\sc no}\} be the set of alternative outcomes.
A division $\mathbb{S}=(S, \bar{S})$ of the set $[n]$ is an ordered partition of players where the first element in the 
ordered pair is the set of {\sc yes}-voters 
and the second element is the set of {\sc no}-voters in $\mathbb{S}$. Thus, for $\mathbb{S}=(S, \bar{S})$, the subset 
$S\subseteq [n]$ comprises the set of {\sc yes}-voters 
and the subset $\bar{S}= [n]\setminus S$ comprises the set of {\sc no}-voters.
Note the convention of representing a bipartitoned division by its first element in blackboard bold.

Let $\mathcal{D}$ be the set of all logically possible divisions $\mathbb{S}$ of $[n]$.
A {\em binary voting game}, in which each player has two possible strategies,
is a function $\mathcal{G}(\mathbb{S})$ mapping the set of all possible divisions $\mathcal{D}$ to two outcomes in $\mathcal{O}$.
A {\em monotonic} binary voting game is one satisfying the condition:\\
\indent (i) {\tt Monotonicity.} If $\mathcal{G}$($\mathbb{S}$)={\sc yes} and $S\subseteq T$, then $\mathcal{G}(\mathbb{T})$={\sc yes}.\\
{\em Monotonicity} states that if a division outcome is {\sc yes}, then the outcome of any division in which at least 
the same players vote {\sc yes} will also be {\sc yes}.
Hence, monotonicity states that if a set of players could ensure a {\sc yes}-outcome by each voting {\sc yes}, 
then any superset of those players could do so as well.
A {\em simple voting game} is a monotonic binary voting game that satisfies the additional condition:\\
\indent (ii) {\tt Non-Triviality.} $\exists \mathbb{S}$ $\mathcal |\ {G}$($\mathbb{S}$)={\sc yes} 
and $\exists \mathbb{S}$ $\mathcal |\ {G}$($\mathbb{S}$)={\sc no}.\\
{\em Non-Triviality} states that not all divisions yield the same outcome,
i.e., there is at least one division whose outcome is {\sc yes} and at least one whose outcome is {\sc no}.
Together, monotonicity and non-triviality ensure that simple voting games also have the
property that if everyone votes {\sc no}, the outcome is {\sc no}, and if everyone votes {\sc yes}, the 
outcome is {\sc yes}.\\
\indent (iii) {\tt Unanimity.} $\mathcal{G}((\emptyset, [n]))$={\sc no} and $\mathcal{G}(([n], \emptyset))$={\sc yes}.

\noindent We remark that {\em unanimity} immediately implies non-triviality. Thus conditions (i) and (iii) also characterize the class of simple voting games.

Call any player whose vote corresponds to the division outcome a {\em successful} player.
Let $\mathcal{W}$ be the collection of all sets of players $S$ such that $\mathcal{G}$($\mathbb{S}$)={\sc yes}
(if each member of $S$ were to vote {\sc yes}, they would be successful {\sc yes}-voters).
We call this the collection of {\em {\sc yes}-successful subsets} of $[n]$, also commonly called {\em winning coalitions}.
We can now alternatively characterize conditions (i)-(iii) as:\\
\indent (i) {\tt Monotonicity.} If $S\in \mathcal{W}$ and $S\subseteq T$ then $T\in \mathcal{W}$.\\
\indent (ii) {\tt Non-Triviality.} $\exists S |$ $S \in \mathcal{W}$ and $\exists S |$ $S \notin \mathcal{W}$\\
\indent (iii) {\tt Unanimity.} $[n]\in \mathcal{W}$ and $\emptyset \notin \mathcal{W}$.\\
In the discussion and proofs that follow, it should be understood that, as is standard in the voting-power literature,
we are discussing simple voting games so defined.

\subsection{The Division Lattice}\label{sec:lattice}

The divisions of a voting game can be plotted on a lattice, called the {\em division lattice} $\mathcal{L}=(\mathcal{D}, \succeq)$.
There is an element $\mathbb{S}$ in the lattice for each ordered division $\mathbb{S}=(S, \bar{S})$.
The elements of the lattice are ordered by comparing the sets of players who vote {\sc yes} in each division.
Specifically, for $\mathbb{T}=(T, \bar{T})$, we have $\mathbb{S}\succ \mathbb{T}$ if and only if $T\subset S$;
that is, the {\sc yes}-voters in $\mathbb{T}$ are a strict subset of those in $\mathbb{S}$. 
This implies that the supremum (top element) of the lattice is the division $\mathbbm{[n]}=([n], \emptyset)$ where every player votes {\sc yes}. 
Similarly, the infimum (bottom element) of the lattice is the division $(\emptyset, [n])$ where every player votes {\sc no}. 
We shade an element of the lattice grey if the division yields a {\sc yes}-outcome, 
and leave it white if it yields a {\sc no}-outcome. 

Consider the weighted voting game $\mathcal{G}=\{8; 5,4,3,2\}$.
(A {\em weighted voting game} is one in which each player's vote has a fixed weight \citep[pp. 31-32]{FelM98}.)
The number prior to the semicolon is the quota required for a {\sc yes}-outcome;
the numbers afterwards are the weights of each player's vote.
The division lattice for this game is shown in Figure~\ref{fig:division-lattice} where
each logically possible division $\mathbb{S}$ is labelled by its {\sc yes}-voters $S$.

\begin{figure}[h!]
\begin{center}
\begin{tikzpicture}[scale=0.4]
\draw (0,0) circle [radius=1]; \node at (0,0) {{\footnotesize $\emptyset$}};

\draw [thick, -] (0,1) -- (1.5,2); \draw [thick, -] (0,1) -- (4.5,2); \draw [thick, -] (0,1) -- (-1.5,2); \draw [thick, -] (0,1) -- (-4.5,2);

\draw (1.5,3) circle [radius=1]; \node at (1.5,3) {{\footnotesize $3$}};
\draw (4.5,3) circle [radius=1]; \node at (4.5,3) {{\footnotesize $4$}};
\draw (-1.5,3) circle [radius=1]; \node at (-1.5,3) {{\footnotesize $2$}};
\draw (-4.5,3) circle [radius=1]; \node at (-4.5,3) {{\footnotesize $1$}};

\draw [thick, -] (-4.5,4)-- (-4.75,5); \draw [thick, -] (-4.5,4)-- (-7.75,5); \draw [thick, -] (-4.5,4)-- (-1.75,5); 
\draw [thick, -] (-1.5,4)-- (-7.25,5); \draw [thick, -] (-1.5,4)-- (1.25,5); \draw [thick, -] (-1.5,4)-- (4.25,5); 
\draw [thick, -] (1.5,4)-- (-4.5,5); \draw [thick, -] (1.5,4)-- (1.5,5); \draw [thick, -] (1.5,4)-- (7.25,5); 
\draw [thick, -] (4.5,4)-- (-1.5,5); \draw [thick, -] (4.5,4)-- (4.5,5); \draw [thick, -] (4.5,4)-- (7.75,5); 

\draw (1.5,6) circle [radius=1]; \node at (1.5,6) {{\footnotesize $23$}};
\draw (4.5,6) circle [radius=1]; \node at (4.5,6) {{\footnotesize $24$}};
\draw (7.5,6) circle [radius=1]; \node at (7.5,6) {{\footnotesize $34$}};
\draw [fill=gray!20] (-4.5,6) circle [radius=1]; \node at (-4.5,6) {{\footnotesize $13$}};
\draw (-1.5,6) circle [radius=1]; \node at (-1.5,6) {\footnotesize {$14$}};
\draw [fill=gray!20] (-7.5,6) circle [radius=1]; \node at (-7.5,6) {{\footnotesize $12$}};

\draw [thick, -] (1.75,8) -- (7.5,7);\draw [thick, -] (1.5,8) -- (-1.5,7);\draw [thick, -] (1.25,8) -- (-4.5,7);
\draw [thick, -] (4.75,8) -- (7.5,7);\draw [thick, -] (4.5,8) -- (4.5,7);\draw [thick, -] (4.25,8) -- (1.5,7);
\draw [thick, -] (-4.75,8) -- (-7.5,7);\draw [thick, -] (-4.25,8) -- (1.5,7);\draw [thick, -] (-4.5,8) -- (-4.5,7);
\draw [thick, -] (-1.75,8) -- (-7.5,7);\draw [thick, -] (-1.5,8) -- (-1.5,7);\draw [thick, -] (-1.25,8) -- (4.5,7);

\draw  [fill=gray!20] (1.5,9) circle [radius=1]; \node at (1.5,9) {{\footnotesize $134$}};
\draw  [fill=gray!20] (4.5,9) circle [radius=1]; \node at (4.5,9) {{\footnotesize $234$}};
\draw  [fill=gray!20] (-1.5,9) circle [radius=1]; \node at (-1.5,9) {{\footnotesize $124$}};
\draw  [fill=gray!20] (-4.5,9) circle [radius=1]; \node at (-4.5,9) {{\footnotesize  $123$}};

\draw [thick, -] (0.2,11) -- (1.5,10); \draw [thick, -] (0.6,11) -- (4.5,10); \draw [thick, -] (-0.2,11) -- (-1.5,10); \draw [thick, -] (-0.6,11) -- (-4.5,10);

\draw  [fill=gray!20] (0,12) circle [radius=1]; \node at (0,12) {{\footnotesize $1234$}};
\end{tikzpicture}
\end{center}
\caption{The Division Lattice $\mathcal{L}$ and its {\sc yes}-outcomes and {\sc no}-outcomes.}\label{fig:division-lattice}
\end{figure}
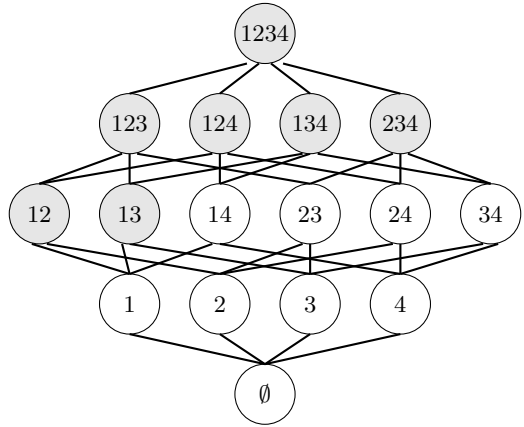

Call a division $\mathbb{S}$ {\em winning} if its outcome is {\sc yes}, {\em losing} if {\sc no}.
The division lattice $\mathcal{L}$ induces two posets, one consisting of the winning {\sc yes}-divisions 
and another of the losing {\sc no}-divisions.
These are illustrated in Figure~\ref{fig:two-posets} for $\{8; 5,4,3,2\}$.

\begin{figure}[h!]
\begin{minipage}{6cm}
\begin{center}
\begin{tikzpicture}[scale=0.4]
\draw [fill=gray!20] (-4.5,6) circle [radius=1]; \node at (-4.5,6) {{\footnotesize $13$}};
\draw [fill=gray!20] (-7.5,6) circle [radius=1]; \node at (-7.5,6) {{\footnotesize $12$}};

\draw [thick, -] (1.25,8) -- (-4.5,7);
\draw [thick, -] (-4.75,8) -- (-7.5,7);\draw [thick, -] (-4.5,8) -- (-4.5,7);
\draw [thick, -] (-1.75,8) -- (-7.5,7);

\draw  [fill=gray!20] (1.5,9) circle [radius=1]; \node at (1.5,9) {{\footnotesize $134$}};
\draw  [fill=gray!20] (4.5,9) circle [radius=1]; \node at (4.5,9) {{\footnotesize $234$}};
\draw  [fill=gray!20] (-1.5,9) circle [radius=1]; \node at (-1.5,9) {{\footnotesize $124$}};
\draw  [fill=gray!20] (-4.5,9) circle [radius=1]; \node at (-4.5,9) {{\footnotesize $123$}};

\draw [thick, -] (0.2,11) -- (1.5,10); \draw [thick, -] (0.6,11) -- (4.5,10); 
\draw [thick, -] (-0.2,11) -- (-1.5,10); \draw [thick, -] (-0.6,11) -- (-4.5,10);

\draw  [fill=gray!20] (0,12) circle [radius=1]; \node at (0,12) {{\footnotesize $1234$}};
\end{tikzpicture}
\end{center}
\end{minipage}
\qquad \quad
\begin{minipage}{6cm}
  \begin{center}
  \begin{tikzpicture}[scale=0.4]
\draw (0,0) circle [radius=1]; \node at (0,0) {{\footnotesize $\emptyset$}};

\draw [thick, -] (0.15,1) -- (1.5,2); \draw [thick, -] (0.5,1) -- (4.5,2); \draw [thick, -] (-0.15,1) -- (-1.5,2); \draw [thick, -] (-.5,1) -- (-4.5,2);

\draw (1.5,3) circle [radius=1]; \node at (1.5,3) {{\footnotesize $3$}};
\draw (4.5,3) circle [radius=1]; \node at (4.5,3) {{\footnotesize $4$}};
\draw (-1.5,3) circle [radius=1]; \node at (-1.5,3) {{\footnotesize $2$}};
\draw (-4.5,3) circle [radius=1]; \node at (-4.5,3) {{\footnotesize $1$}};

 \draw [thick, -] (-4.5,4)-- (-1.75,5); 
 \draw [thick, -] (-1.5,4)-- (1.25,5);  \draw [thick, -] (-1.15,4)-- (4.25,5); 
 \draw [thick, -] (1.25,4)-- (1.5,5); \draw [thick, -] (1.75,4)-- (7.25,5); 
\draw [thick, -] (4.25,4)-- (-1.5,5); \draw [thick, -] (4.5,4)-- (4.5,5); \draw [thick, -] (4.75,4)-- (7.75,5); 

\draw (1.5,6) circle [radius=1]; \node at (1.5,6) {{\footnotesize $23$}};
\draw (4.5,6) circle [radius=1]; \node at (4.5,6) {{\footnotesize $24$}};
\draw (7.5,6) circle [radius=1]; \node at (7.5,6) {{\footnotesize $34$}};
\draw (-1.5,6) circle [radius=1]; \node at (-1.5,6) {{\footnotesize $14$}};
 \end{tikzpicture}
\end{center}
\end{minipage}
\caption{Two Induced Posets.}\label{fig:two-posets}
\end{figure}
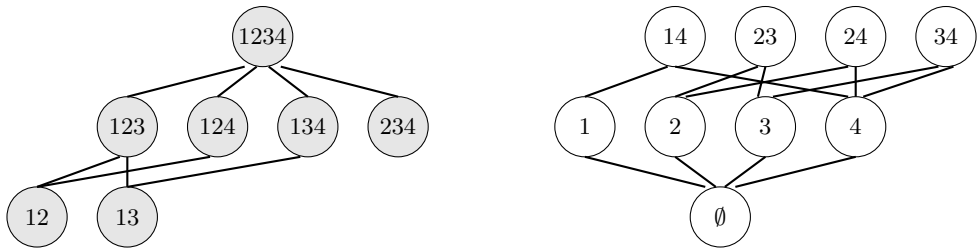

It is useful to invert the poset of losing divisions so that it has a supremum $(\emptyset, [n])$.
Thus we obtain two posets with supremums, called the {\sc yes}-poset and the {\sc no}-poset, respectively. 
See Figure~\ref{fig:yes-no-posets}.

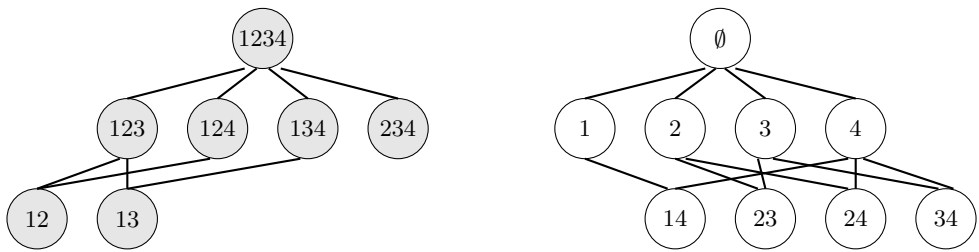
\begin{figure}[h!]
\begin{minipage}{6cm}
\begin{center}
\begin{tikzpicture}[scale=0.4]
\draw [fill=gray!20] (-4.5,6) circle [radius=1]; \node at (-4.5,6) {{\footnotesize $13$}};
\draw [fill=gray!20] (-7.5,6) circle [radius=1]; \node at (-7.5,6) {{\footnotesize $12$}};

\draw [thick, -] (1.25,8) -- (-4.5,7);
\draw [thick, -] (-4.75,8) -- (-7.5,7);\draw [thick, -] (-4.5,8) -- (-4.5,7);
\draw [thick, -] (-1.75,8) -- (-7.5,7);

\draw  [fill=gray!20] (1.5,9) circle [radius=1]; \node at (1.5,9) {{\footnotesize $134$}};
\draw  [fill=gray!20] (4.5,9) circle [radius=1]; \node at (4.5,9) {{\footnotesize $234$}};
\draw  [fill=gray!20] (-1.5,9) circle [radius=1]; \node at (-1.5,9) {{\footnotesize $124$}};
\draw  [fill=gray!20] (-4.5,9) circle [radius=1]; \node at (-4.5,9) {{\footnotesize $123$}};

\draw [thick, -] (0.2,11) -- (1.5,10); \draw [thick, -] (0.6,11) -- (4.5,10); 
\draw [thick, -] (-0.2,11) -- (-1.5,10); \draw [thick, -] (-0.6,11) -- (-4.5,10);

\draw  [fill=gray!20] (0,12) circle [radius=1]; \node at (0,12) {{\footnotesize $1234$}};
\end{tikzpicture}
\end{center}
\end{minipage}
\qquad \quad
\begin{minipage}{6cm}
  \begin{center}
  \begin{tikzpicture}[scale=0.4]
\draw (0,0) circle [radius=1]; \node at (0,0) {{\footnotesize $\emptyset$}};

\draw [thick, -] (0.15,-1) -- (1.5,-2); \draw [thick, -] (0.5,-1) -- (4.5,-2); \draw [thick, -] (-0.15,-1) -- (-1.5,-2); \draw [thick, -] (-.5,-1) -- (-4.5,-2);

\draw (1.5,-3) circle [radius=1]; \node at (1.5,-3) {{\footnotesize $3$}};
\draw (4.5,-3) circle [radius=1]; \node at (4.5,-3) {{\footnotesize $4$}};
\draw (-1.5,-3) circle [radius=1]; \node at (-1.5,-3) {{\footnotesize $2$}};
\draw (-4.5,-3) circle [radius=1]; \node at (-4.5,-3) {{\footnotesize $1$}};

 \draw [thick, -] (-4.5,-4)-- (-1.75,-5); 
 \draw [thick, -] (-1.5,-4)-- (1.25,-5);  \draw [thick, -] (-1.15,-4)-- (4.25,-5); 
 \draw [thick, -] (1.25,-4)-- (1.5,-5); \draw [thick, -] (1.75,-4)-- (7.25,-5); 
\draw [thick, -] (4.25,-4)-- (-1.5,-5); \draw [thick, -] (4.5,-4)-- (4.5,-5); \draw [thick, -] (4.75,-4)-- (7.75,-5); 

\draw (1.5,-6) circle [radius=1]; \node at (1.5,-6) {{\footnotesize $23$}};
\draw (4.5,-6) circle [radius=1]; \node at (4.5,-6) {{\footnotesize $24$}};
\draw (7.5,-6) circle [radius=1]; \node at (7.5,-6) {{\footnotesize $34$}};
\draw (-1.5,-6) circle [radius=1]; \node at (-1.5,-6) {{\footnotesize $14$}};
 \end{tikzpicture}
\end{center}
\end{minipage}
\caption{The {\sc yes}-poset and the {\sc no}-poset.}\label{fig:yes-no-posets}
\end{figure}

%

In the {\sc yes}-poset, we say that $\mathbb{T}$ is a {\em loyal child} of $\mathbb{S}$ (and $\mathbb{S}$ is a {\em loyal parent} of $\mathbb{T}$)
if and only if $S=T\cup \{j\}$. That is, $\mathbb{T}$ is identical to $\mathbb{S}$ except that exactly one less player votes {\sc yes} in $\mathbb{T}$ than in $\mathbb{S}$.
The nomenclature {\em loyal} refers to the fact that $\mathbb{S}$ and $\mathbb{T}$ have the same outcome.
Symmetrically, in the {\sc no}-poset, we say that $\mathbb{T}$ is a {\em loyal child} of $\mathbb{S}$ (and $\mathbb{S}$ is a {\em loyal parent} of $\mathbb{T}$)
if and only if $S=T\setminus \{j\}$. That is, $\mathbb{T}$ is identical to $\mathbb{S}$ except that exactly one less player votes {\sc no} in $\mathbb{T}$ than 
in $\mathbb{S}$.
Moreover, we call a division's {\em loyal descendants} those divisions that are its loyal children, their loyal children, and so on.

We maintain the same terminology when describing the entire division lattice $\mathcal{L}$,
denoting by $LC(\mathbb{S})$ the set of loyal children of $\mathbb{S}$ in $\mathcal{L}$.
Notice that this implies that if $\mathbb{S}$ is a winning {\sc yes}-division then its loyal children, if it has any, lie immediately {\em beneath} it in $\mathcal{L}$.
By contrast, since the {\sc no}-poset was inverted,
if $\mathbb{S}$ is a losing {\sc no}-division then its loyal children, if it has any,  lie immediately {\em above} it in the division lattice $\mathcal{L}$.



\section{The Recursive Measure of Voting Power}\label{sec:recursive}

Our goal is to construct a measure of voting power that incorporates partial causal efficacy,
which, we suggested, requires a recursive measure.
How exactly should such a measure be defined?
Our formulation, presented in Section~\ref{sec:RM-definition},
is motivated by the division lattice representation of voting games, in particular,
via the loyal children concept proffered by the {\sc yes}-poset and {\sc no}-poset.
We provide an example illustrating how the measure is calculated using a weighted voting game in Section~\ref{sec:example}
and an interpretation of the measure, in terms of random walks on the {\sc yes}-poset and {\sc no}-poset, in Section~\ref{sec:random-walks}.
The strength of our proposed measure will be demonstrated in Sections~\ref{sec:minimal-postulates} to \ref{sec:quarrel-postulate}
where we show it satisfies a series of reasonable voting postulates.

\subsection{A Recursive Formulation}\label{sec:RM-definition}

We define a {\em measure of voting power} for simple voting games as a function $\Psi$ that  assigns to each player $i$ a nonnegative real number $\Psi_i \geq 0$ and that satisfies two sets of basic adequacy postulates:
the {\em iso-invariance} postulate, according to which the a priori voting power of any player remains the same between two isomorphic games;
and the {\em dummy} postulates, according to which a player has zero a priori voting power if and only if it is a dummy (i.e., not decisive in any division),
and the addition of a dummy to a voting structure leaves other players' a priori voting power unchanged \citep[p. 236]{FelM98} (see section \ref{sec:minimal-postulates} below).
(When we refer to $\Psi$'s measure of a priori voting power, we shall represent it using the lower case $\psi$.)
We define an {\em efficacy measure} of voting power (of which decisiveness measures such as $PB$ are a species)
as a measure based strictly on the causal efficacy of the player in each of the voting game's divisions, weighted by the significance of each division.

An efficacy measure of voting power $\Psi$ therefore assigns to each player $i$ a value
\begin{align*}
\Psi_i &= \sum_{\mathbb{S}\in \mathcal{D}} \alpha_i(\mathbb{S})\cdot \gamma(\mathbb{S})
\end{align*}
where $\alpha_i(\mathbb{S})$ is the {\em division efficacy score} of player $i$ in division $\mathbb{S}$
and $\gamma(\mathbb{S})$ is the {\em division weight} assigned to each division $\mathbb{S}\in \mathcal{D}$.
The defining characteristic of a given measure of voting power is therefore its specification
of a player's division efficacy score for each division and each division's weight.

For example, the classic $PB$ measure models a priori voting power by assigning each division an equal weight,
such that $\gamma^{PB}(\mathbb{S})=\frac{1}{|\mathcal{D}|}$, which, for binary voting games, amounts to $\frac{1}{2^n}$.
Here $\gamma(\mathbb{S})$ is typically interpreted as representing the ex ante probability $\mathbb{P}(\mathbb{S})$ of each division $\mathbb{S}\in \mathcal{D}$,
where for a priori voting power each division is assumed to be equiprobable.
(For a posteriori voting power, each division's weight (probability) depends on the distribution voter preferences.)
The classic $PB$ measure defines the division efficacy score, in turn, {\em non-recursively} as
$$
\alpha^{PB}_i(\mathbb{S}) =
\begin{cases}
1& \mathrm{if\ } i \mathrm{\ is\ decisive\ in\ } \mathbb{S}\\
0 & \mathrm{otherwise}
\end{cases}
$$
A player $i$ is {\sc yes}{\em -decisive} in division $\mathbb{S}$ if and only if $i\in S\in \mathcal{W}$ but $S\setminus\{i\}\notin \mathcal{W}$;
is {\sc no}{\em -decisive} if and only if $i\notin S\notin \mathcal{W}$ but $S\cup\{i\}\in \mathcal{W}$;
and is {\em decisive} if and only if it is either {\sc yes}-decisive or {\sc no}-decisive.

Given the conceptual shortcomings inherent to a measure based only on full causal efficacy,
we incorporate partial causal efficacy via a recursive definition of the division efficacy score
(whilst equating, as with $PB$, the division weight to the division probability, $\gamma(\mathbb{S})=\mathbb{P}(\mathbb{S})$).
In particular, the {\em Recursive Measure of Voting Power} ($RM$) is defined as
\begin{align}\label{RM-general}
RM_i &= \sum_{\mathbb{S}\in \mathcal{D}} \alpha_i(\mathbb{S})\cdot \mathbb{P}(\mathbb{S})
\end{align}
where $\alpha_i(\mathbb{S})$, the division efficacy score of player $i$ in division $\mathbb{S}$, is defined recursively as
$$
\alpha_i(\mathbb{S}) =
\begin{cases}
1& \mathrm{if\ } i \mathrm{\ is\ decisive\ in\ } \mathbb{S}\\
0 & \mathrm{if\ }  i \mathrm{\ is \ not\ successful\ in\ } \mathbb{S}\\
\frac{1}{|LC(\mathbb{S})|}\cdot \sum_{\mathbb{\hat{S}}\in LC(\mathbb{S})} \alpha_i(\hat{\mathbb{S}}) & \mathrm{otherwise}
\end{cases}
$$
where $LC(\mathbb{S})$ is the set of loyal children of $\mathbb{S}$ in the division lattice.

Definition~(\ref{RM-general}) gives $RM$ in generalized form.
To measure {\em a priori} voting power according to $RM$, which we represent as $RM'$,
we assume equiprobable divisions and therefore set $\mathbb{P}=\frac{1}{|\mathcal{D}|}$. Thus
\begin{align}\label{RM-apriori}
RM'_i &= \frac{1}{|\mathcal{D}|}\cdot \sum_{\mathbb{S}\in \mathcal{D}} \alpha_i(\mathbb{S})
\end{align}
where, for binary voting games, $\frac{1}{|\mathcal{D}|}=\frac{1}{2^n}$.

To compute the efficacy score $\alpha$  for a player we distinguish between its {\sc yes}{\em -efficacy score}~$\alpha^+$ 
and its {\sc no}{\em -efficacy score}~$\alpha^-$. For $RM$ these are defined recursively as:
$$
\alpha^+_i(\mathbb{S}) =
\begin{cases}
1& \mathrm{if\ } i \mathrm{\ is\ } \textsc{yes}\mathrm{-decisive\ in\ } \mathbb{S}\\
0 & \mathrm{if\ }  \mathbb{S} \mathrm{\ is \ losing\ or\ if\ } i\notin S\\
\frac{1}{|LC(\mathbb{S})|}\cdot \sum_{\hat{\mathbb{S}}\in LC(\mathbb{S})} \alpha^+_i(\hat{\mathbb{S}}) & \mathrm{otherwise}
\end{cases}
$$

$$
\alpha^-_i(\mathbb{S}) =
\begin{cases}
1& \mathrm{if\ } i \mathrm{\ is\ }\textsc{no}\mathrm{-decisive\ in\ } \mathbb{S}\\
0 &  \mathrm{if\ }  \mathbb{S} \mathrm{\ is \ winning\ or\ if\ } i\in S\\
\frac{1}{|LC(\mathbb{S})|}\cdot \sum_{\hat{\mathbb{S}}\in LC(\mathbb{S})} \alpha^-_i(\hat{\mathbb{S}}) & \mathrm{otherwise}
\end{cases}
$$
The efficacy score is then the sum of the {\sc yes}-efficacy and {\sc no}-efficacy scores, 
$\alpha_i(\mathbb{S})=\alpha^+_i(\mathbb{S})+\alpha^-_i(\mathbb{S})$.
Correspondingly, $RM$ can be written as the sum of a {\em Recursive Measure of {\sc Yes}-Voting Power} $RM^+$
and a {\em Recursive Measure of {\sc No}-Voting Power} $RM^-$:
\begin{align*}
RM_i
&= \sum_{\mathbb{S}\in \mathcal{D}} (\alpha^+_i(\mathbb{S})+\alpha^-_i(\mathbb{S}))\cdot \mathbb{P}(\mathbb{S}) \\
&= \sum_{\mathbb{S}\in \mathcal{D}} \alpha^+_i(\mathbb{S})\cdot \mathbb{P}(\mathbb{S}) + \sum_{\mathbb{S}\in \mathcal{D}} \alpha^-_i(\mathbb{S})\cdot \mathbb{P}(\mathbb{S})\\
 &= RM^+_i + RM^-_i 
\end{align*}
Note that decisiveness measures of a priori voting power such as $PB$ can be computed using a shortcut.
Precisely because such measures only consider full decisiveness,
their measure of a priori {\sc yes}- and {\sc no}-voting power will be equal.
This is because, by definition of decisiveness, for each winning division in which a voter is {\sc yes}-decisive,
there is exactly one corresponding losing division in which the voter is {\sc no}-decisive.
It follows that $PB^+=PB^-$ and that $PB$ is therefore equal to
\begin{align*}
PB_i
&= \frac{\sum_{\mathbb{S}\in \mathcal{D}} \alpha^{+ PB}_i(\mathbb{S})}{2^{n-1}}
\end{align*}
This symmetry between a priori {\sc yes}- and {\sc no}-voting power does not hold, however, for efficacy measures in general.
In a division in which a voter is only partially {\sc yes}-efficacious,
in the corresponding division in which all other votes are held constant but in which the player votes {\sc no},
the voter will be zero {\sc no}-efficacious because the player will now be unsuccessful.
This is why we cannot calculate $RM'$ using a shortcut formula analogous to the one typically used for $PB$,
but must do so on the basis of both $RM'^+$ and $RM'^-$.

\subsection{Calculating Efficacy Scores via the Yes-Poset and No-Poset}\label{sec:example}
We can calculate the efficacy scores of player $i$ via the {\sc yes}-Poset and {\sc no}-Poset.
Again we illustrate this using the posets of Figure~\ref{fig:yes-no-posets} for
the weighted game $\{8; 5,4,3,2\}$.
Using these two posets we recursively find the efficacy scores of each player at each node. 

For example, let's calculate the efficacy scores for player 2.
In the {\sc yes}-poset, the divisions $\mathbb{S}$ where player 2 is {\sc yes}-decisive in Figure~\ref{fig:ES-recursive} are striped downwards (from left to right) and its {\sc yes}-efficacy score is $\alpha^+_2(\mathbb{S})=1$.
For any winning division where player 2 votes {\sc no}, its {\sc yes}-efficacy score is $\alpha^+_2(\mathbb{S})=0$.
(Note that the {\sc yes}-efficacy score of each player, including player 2, is zero for every node in the {\sc no}-poset, since they are all losing divisions.)
The values of the remaining nodes for player 2 in the {\sc yes}-poset are then calculated recursively; see Figure~\ref{fig:ES-recursive}.
Similarly, in the {\sc no}-poset, the nodes where player 2 is {\sc no}-decisive are striped upwards and its {\sc no}-efficacy score is $\alpha^-_2(\mathbb{S})=1$.
For any losing division where player 2 votes {\sc yes}, its {\sc no}-efficacy score is $\alpha^-_2(\mathbb{S})=0$.
(Note that the {\sc no}-efficacy score of each player, including player 2, is zero for every node in the {\sc yes}-poset, since they are all winning divisions.)
The values of the remaining nodes for player 2 in the {\sc no}-poset are then calculated recursively; again, see Figure~\ref{fig:ES-recursive}.

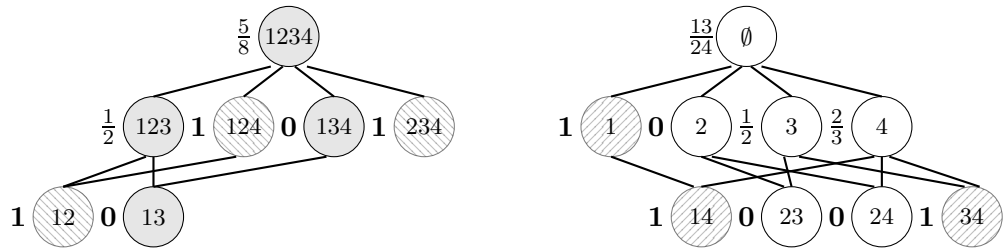
\begin{figure}[h!]
\begin{minipage}{6cm}
\begin{center}
\begin{tikzpicture}[scale=0.4]
\draw [fill=gray!20] (-4.5,6) circle [radius=1]; \node at (-4.5,6) {{\footnotesize $13$}};  \node at (-6,6) {{\bf 0}};
\draw [pattern=north west lines, opacity=.5] (-7.5,6) circle [radius=1]; \node at (-7.5,6) {{\footnotesize $12$}}; \node at (-9,6) {{\bf 1}};

\draw [thick, -] (1.25,8) -- (-4.5,7);
\draw [thick, -] (-4.75,8) -- (-7.5,7);\draw [thick, -] (-4.5,8) -- (-4.5,7);
\draw [thick, -] (-1.75,8) -- (-7.5,7);

\draw  [fill=gray!20] (1.5,9) circle [radius=1]; \node at (1.5,9) {{\footnotesize $134$}}; \node at (0,9) {{\bf 0}};
\draw  [pattern=north west lines, opacity=.5] (4.5,9) circle [radius=1]; \node at (4.5,9) {{\footnotesize $234$}};\node at (3,9) {{\bf 1}};
\draw  [pattern=north west lines, opacity=.5] (-1.5,9) circle [radius=1]; \node at (-1.5,9) {{\footnotesize $124$}};\node at (-3,9) {{\bf 1}};
\draw  [fill=gray!20] (-4.5,9) circle [radius=1]; \node at (-4.5,9) {{\footnotesize $123$}};\node at (-6,9) {{\bf $\frac12$}};

\draw [thick, -] (0.2,11) -- (1.5,10); \draw [thick, -] (0.6,11) -- (4.5,10); \draw [thick, -] (-0.2,11) -- (-1.5,10); \draw [thick, -] (-0.6,11) -- (-4.5,10);

\draw  [fill=gray!20] (0,12) circle [radius=1]; \node at (0,12) {{\footnotesize $1234$}};\node at (-1.5,12) {{\bf $\frac58$}};
\end{tikzpicture}
\end{center}
\end{minipage}
\qquad \quad
\begin{minipage}{6cm}
  \begin{center}
  \begin{tikzpicture}[scale=0.4]
\draw (0,0) circle [radius=1]; \node at (0,0) {{\footnotesize $\emptyset$}};\node at (-1.5,0) {{\bf $\frac{13}{24}$}};

\draw [thick, -] (0.15,-1) -- (1.5,-2); \draw [thick, -] (0.5,-1) -- (4.5,-2); \draw [thick, -] (-0.15,-1) -- (-1.5,-2); \draw [thick, -] (-.5,-1) -- (-4.5,-2);

\draw (1.5,-3) circle [radius=1]; \node at (1.5,-3) {{\footnotesize $3$}}; \node at (0,-3) {{\bf $\frac12$}};
\draw (4.5,-3) circle [radius=1]; \node at (4.5,-3) {{\footnotesize $4$}}; \node at (3,-3) {{\bf $\frac23$}};
\draw (-1.5,-3) circle [radius=1]; \node at (-1.5,-3) {{\footnotesize $2$}};  \node at (-3,-3) {{\bf 0}};
\draw  [pattern=north east lines, opacity=.5]  (-4.5,-3) circle [radius=1]; \node at (-4.5,-3) {{\footnotesize $1$}}; \node at (-6,-3) {{\bf 1}};

 \draw [thick, -] (-4.5,-4)-- (-1.75,-5); 
 \draw [thick, -] (-1.5,-4)-- (1.25,-5);  \draw [thick, -] (-1.15,-4)-- (4.25,-5); 
 \draw [thick, -] (1.25,-4)-- (1.5,-5); \draw [thick, -] (1.75,-4)-- (7.25,-5); 
\draw [thick, -] (4.25,-4)-- (-1.5,-5); \draw [thick, -] (4.5,-4)-- (4.5,-5); \draw [thick, -] (4.75,-4)-- (7.75,-5); 

\draw (1.5,-6) circle [radius=1]; \node at (1.5,-6) {{\footnotesize $23$}}; \node at (0,-6) {{\bf 0}};
\draw (4.5,-6) circle [radius=1]; \node at (4.5,-6) {{\footnotesize $24$}}; \node at (3,-6) {{\bf 0}};
\draw  [pattern=north east lines, opacity=.5] (7.5,-6) circle [radius=1]; \node at (7.5,-6) {{\footnotesize $34$}}; \node at (6,-6) {{\bf 1}};
\draw  [pattern=north east lines, opacity=.5] (-1.5,-6) circle [radius=1]; \node at (-1.5,-6) {{\footnotesize $14$}}; \node at (-3,-6) {{\bf 1}};

 \end{tikzpicture}
\end{center}
\end{minipage}
\caption{Calculating the Efficacy Scores Recursively.}\label{fig:ES-recursive}
\end{figure}

These values can be shown on a single picture using the division lattice $\mathcal{L}$ as in Figure~\ref{fig:ES-values}.

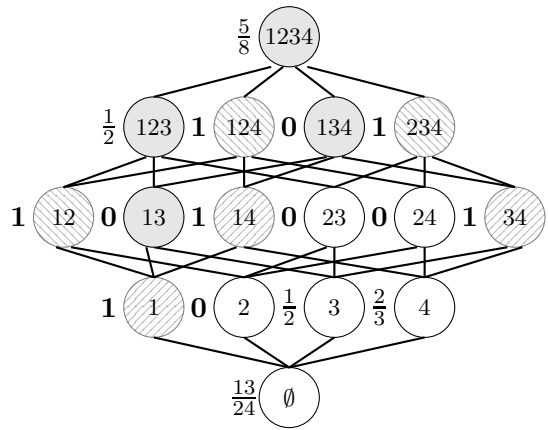
\begin{figure}[h!]
\begin{center}
\begin{tikzpicture}[scale=0.4]
\draw (0,0) circle [radius=1]; \node at (0,0) {{\footnotesize $\emptyset$}}; \node at (-1.5,0) {{\bf $\frac{13}{24}$}};

\draw [thick, -] (0,1) -- (1.5,2); \draw [thick, -] (0,1) -- (4.5,2); \draw [thick, -] (0,1) -- (-1.5,2); \draw [thick, -] (0,1) -- (-4.5,2);

\draw (1.5,3) circle [radius=1]; \node at (1.5,3) {{\footnotesize $3$}}; \node at (0,3) {{\bf $\frac12$}};
\draw (4.5,3) circle [radius=1]; \node at (4.5,3) {{\footnotesize $4$}}; \node at (3,3) {{\bf $\frac23$}};
\draw (-1.5,3) circle [radius=1]; \node at (-1.5,3) {{\footnotesize $2$}};  \node at (-3,3) {{\bf 0}};
\draw  [pattern=north east lines, opacity=.5]  (-4.5,3) circle [radius=1]; \node at (-4.5,3) {{\footnotesize $1$}}; \node at (-6,3) {{\bf 1}};

\draw [thick, -] (-4.5,4)-- (-4.75,5); \draw [thick, -] (-4.5,4)-- (-7.75,5); \draw [thick, -] (-4.5,4)-- (-1.75,5); 
\draw [thick, -] (-1.5,4)-- (-7.25,5); \draw [thick, -] (-1.5,4)-- (1.25,5); \draw [thick, -] (-1.5,4)-- (4.25,5); 
\draw [thick, -] (1.5,4)-- (-4.5,5); \draw [thick, -] (1.5,4)-- (1.5,5); \draw [thick, -] (1.5,4)-- (7.25,5); 
\draw [thick, -] (4.5,4)-- (-1.5,5); \draw [thick, -] (4.5,4)-- (4.5,5); \draw [thick, -] (4.5,4)-- (7.75,5); 

\draw (1.5,6) circle [radius=1]; \node at (1.5,6) {{\footnotesize $23$}};  \node at (0,6) {{\bf 0}};
\draw (4.5,6) circle [radius=1]; \node at (4.5,6) {{\footnotesize $24$}};  \node at (3,6) {{\bf 0}};
\draw  [pattern=north east lines, opacity=.5]  (7.5,6) circle [radius=1]; \node at (7.5,6) {{\footnotesize $34$}}; \node at (6,6) {{\bf 1}};
\draw [fill=gray!20] (-4.5,6) circle [radius=1]; \node at (-4.5,6) {{\footnotesize $13$}}; \node at (-6,6) {{\bf 0}};
\draw  [pattern=north east lines, opacity=.5]  (-1.5,6) circle [radius=1]; \node at (-1.5,6) {{\footnotesize $14$}}; \node at (-3,6) {{\bf 1}};
\draw [pattern=north west lines, opacity=.5] (-7.5,6) circle [radius=1]; \node at (-7.5,6) {{\footnotesize $12$}}; \node at (-9,6) {{\bf 1}};

\draw [thick, -] (1.75,8) -- (7.5,7);\draw [thick, -] (1.5,8) -- (-1.5,7);\draw [thick, -] (1.25,8) -- (-4.5,7);
\draw [thick, -] (4.75,8) -- (7.5,7);\draw [thick, -] (4.5,8) -- (4.5,7);\draw [thick, -] (4.25,8) -- (1.5,7);
\draw [thick, -] (-4.75,8) -- (-7.5,7);\draw [thick, -] (-4.25,8) -- (1.5,7);\draw [thick, -] (-4.5,8) -- (-4.5,7);
\draw [thick, -] (-1.75,8) -- (-7.5,7);\draw [thick, -] (-1.5,8) -- (-1.5,7);\draw [thick, -] (-1.25,8) -- (4.5,7);

\draw  [fill=gray!20] (1.5,9) circle [radius=1]; \node at (1.5,9) {{\footnotesize $134$}}; \node at (0,9) {{\bf 0}};
\draw  [pattern=north west lines, opacity=.5] (4.5,9) circle [radius=1]; \node at (4.5,9) {{\footnotesize $234$}};\node at (3,9) {{\bf 1}};
\draw  [pattern=north west lines, opacity=.5] (-1.5,9) circle [radius=1]; \node at (-1.5,9) {{\footnotesize $124$}};\node at (-3,9) {{\bf 1}};
\draw  [fill=gray!20] (-4.5,9) circle [radius=1]; \node at (-4.5,9) {{\footnotesize $123$}};\node at (-6,9) {{\bf $\frac12$}};

\draw [thick, -] (0.2,11) -- (1.5,10); \draw [thick, -] (0.6,11) -- (4.5,10); \draw [thick, -] (-0.2,11) -- (-1.5,10); \draw [thick, -] (-0.6,11) -- (-4.5,10);

\draw  [fill=gray!20] (0,12) circle [radius=1]; \node at (0,12) {{\footnotesize $1234$}}; \node at (-1.5,12) {{\bf $\frac{5}{8}$}};
\end{tikzpicture}
\caption{The Efficacy Scores}\label{fig:ES-values}
\end{center}
\end{figure}

Given the efficacy scores we may compute the a priori $RM'$ voting power of the second player.
Reading from Figure~\ref{fig:ES-recursive}, we see that $RM'^+_2=\frac{1}{16}(\frac58+\frac12+1+0+1+1+0)=\frac{33}{128}$
and $RM'^-_2=\frac{1}{16}(1+0+0+1+1+0+\frac12+\frac23+\frac{13}{24})=\frac{113}{384}$.
Hence $RM'_2=RM'^+_2+RM'^-_2= \frac{33}{128}+ \frac{113}{384}=\frac{212}{384}=\frac{53}{96}$.

\subsection{Random Walks}\label{sec:random-walks}
A natural interpretation of the $RM$ efficacy scores is given by the concept of random walks in stochastic processes.
Specifically, if $\mathbb{S}$ yields a {\sc yes}-outcome, the {\sc yes}-efficacy score of player $i$ is the probability that a uniform random
walk starting at $\mathbb{S}$ in the {\sc yes}-poset ever reaches a node where $i$ is {\sc yes}-decisive.
Here a uniform random walk means that from the initial node the walk travels next to any loyal child with equal probability, 
and then to any loyal child of those nodes with equal probability, and so on.
The random walk terminates when it reaches a node with no loyal children.
For example, in Figure~\ref{fig:ES-recursive}, at the node represented by $S=\{1,2,3\}$ player 2 has a {\sc yes}-efficacy 
score equal to $\frac12$ because
a random walk starting there has two loyal children, and moves to each with probability $\frac12$, and in each case then terminates 
because neither of these loyal children have any loyal children themselves.
The second player is {\sc yes}-decisive in the loyal child represented by $S=\{2,3\}$ but not in the one represented by $S=\{1,3\}$.
Thus half the random walks starting from the node represented by $S=\{1,2,3\}$ pass through a node in which
player 2 is {\sc yes}-decisive; consequently $\alpha^+_2(\mathbb{S})=\frac12$.

Symmetrically, if $\mathbb{S}$ yields a {\sc no}-outcome,
the {\sc no}-efficacy score of player $i$ is the probability that a uniform random walk starting at $\mathbb{S}$ in the {\sc no}-poset 
ever reaches a node where $i$ is {\sc no}-decisive.
Here a uniform random walk means that from the initial node the walk travels next to any loyal child with equal probability, etc., and 
terminates when it reaches a node with no loyal children.
For example, in Figure~\ref{fig:ES-recursive}, at the node $\mathbb{S}=\emptyset$, the second player has a {\sc no}-efficacy score 
equal to $\frac{13}{24}$.
The reader may verify that a random walk starting from $\mathbb{S}=(\emptyset, [n])$ hits a node where $i$ is {\sc no}-decisive with 
probability $\frac{13}{24}$.

This combinatorial view applies because such random walks naturally encode the
recursive formula's defining efficacy scores. This viewpoint, whilst not required in the proofs that follow,
provides additional insight into the proofs' motivation.

\section{Minimal Adequacy Postulates}\label{sec:minimal-postulates}

In the rest of the paper, we assess the strength of $RM$ by testing it against a set of voting postulates.%
\footnote{For a comprehensive treatment of the postulates most typically deemed to be reasonable to expect measures of voting power to satisfy,
see \citet{FelM98, LarV05a}.}
We begin with the basic {\em adequacy postulates},
namely the {\em iso-invariance postulate} and {\em dummy postulates},
which any reasonable measure of voting power ought to satisfy \citep{FelM98},
and whose satisfaction we embedded in the definition of a measure of voting power.
We shall now prove that $RM$ satisfies these postulates.

\subsection{The Iso-Invariance Postulate}
Two voting games $\mathcal{G}$ and $\hat{\mathcal{G}}$ are {\em isomorphic} if each division in the one maps in a 
one-to-one correspondence onto an identical division with the {\bf same} outcome in the other.
In particular, the {\sc yes}-successful sets $\mathcal{W}$ and $\hat{\mathcal{W}}$ are identical after relabelling the players' names.
The iso-invariance postulate requires that any player's a priori voting power in two isomorphic simple voting games be identical.
Specifically, a measure of voting power $\Psi$,
according to which $\Psi_i$ and $\hat{\Psi}_i$ are player $i$'s voting power in $\mathcal{G}$ and $\hat{\mathcal{G}}$, respectively,
satisfies the {\em iso-invariance postulate} if:
\begin{enumerate}
 \item [{\sc (iso)}] For iso-invariant voting games $\mathcal{G}$ and $\hat{\mathcal{G}}$, 
 we have $\psi_i=\hat{\psi}_i$ for any player $i$.
\end{enumerate}
Evidently, a violation of the iso-invariance postulate would be a critical defect:
iso-invariance merely expresses the requirement that a priori voting power depend on nothing
but the structure of the game itself and the position of each player in that structure.

\begin{theorem}\label{thm:iso-postulate}
RM satisfies the iso-invariance postulate.
\end{theorem}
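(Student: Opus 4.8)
The plan is to exploit the fact that the recursive definition of the division efficacy score $\alpha_i(\mathbb{S})$ refers only to data that an isomorphism preserves: the outcome of each division, the decisiveness status of a player, and the loyal-child structure of the division lattice. Let $\sigma : [n] \to [n]$ denote the relabelling bijection witnessing the isomorphism between $\mathcal{G}$ and $\hat{\mathcal{G}}$, and for a division $\mathbb{S} = (S, \bar{S})$ write $\sigma(\mathbb{S}) = (\sigma(S), \sigma(\bar{S}))$ for the corresponding division of $\hat{\mathcal{G}}$. By definition of isomorphism, $\mathcal{G}(\mathbb{S}) = \hat{\mathcal{G}}(\sigma(\mathbb{S}))$ for every $\mathbb{S} \in \mathcal{D}$, and $S \in \mathcal{W}$ if and only if $\sigma(S) \in \hat{\mathcal{W}}$. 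The goal reduces to proving, for every player $i$ and every division $\mathbb{S}$, that $\alpha_i(\mathbb{S}) = \hat{\alpha}_{\sigma(i)}(\sigma(\mathbb{S}))$, where $\hat{\alpha}$ denotes the efficacy score computed in $\hat{\mathcal{G}}$.

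I would prove this identity by well-founded induction, handling the \textsc{yes}-efficacy score $\alpha^+_i$ and the \textsc{no}-efficacy score $\alpha^-_i$ separately and then summing. For $\alpha^+$ I would induct on the number of \textsc{yes}-voters $|S|$, since every loyal child in the \textsc{yes}-poset has strictly fewer \textsc{yes}-voters; symmetrically, for $\alpha^-$ I would induct on the number of \textsc{no}-voters $|\bar{S}| = n - |S|$, since every loyal child in the \textsc{no}-poset has strictly fewer \textsc{no}-voters. In the base cases I would verify that $i$ is \textsc{yes}-decisive (respectively \textsc{no}-decisive) in $\mathbb{S}$ exactly when $\sigma(i)$ is in $\sigma(\mathbb{S})$ --- which is immediate since decisiveness is defined purely through membership in $\mathcal{W}$, and $\sigma$ preserves winning coalitions --- and that $i$ is successful in $\mathbb{S}$ exactly when $\sigma(i)$ is successful in $\sigma(\mathbb{S})$, which holds because outcomes and the \textsc{yes}/\textsc{no} vote of each player are preserved. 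Both base branches therefore assign matching values ($1$ or $0$).

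The crux of the recursive case is the claim that $\sigma$ restricts to a bijection between $LC(\mathbb{S})$ and $LC(\sigma(\mathbb{S}))$. This is where the argument must be made carefully, since the loyal-child relation combines a combinatorial condition (the two divisions differ in exactly one voter) with an outcome condition (parent and child share the same outcome); I would check that $\sigma$, being a bijection that preserves outcomes, carries each loyal child of $\mathbb{S}$ to a loyal child of $\sigma(\mathbb{S})$ and conversely, so that in particular $|LC(\mathbb{S})| = |LC(\sigma(\mathbb{S}))|$. Granting this, the recursive branch
$$
\alpha^+_i(\mathbb{S}) = \frac{1}{|LC(\mathbb{S})|} \sum_{\hat{\mathbb{S}} \in LC(\mathbb{S})} \alpha^+_i(\hat{\mathbb{S}})
$$
matches its hatted counterpart term by term via the induction hypothesis applied to each loyal child, and likewise for $\alpha^-$. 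The identity $\alpha_i(\mathbb{S}) = \hat{\alpha}_{\sigma(i)}(\sigma(\mathbb{S}))$ then follows by summing the \textsc{yes}- and \textsc{no}-parts.

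Finally, I would assemble the full measure. Because $\sigma$ permutes $\mathcal{D}$ and $|\mathcal{D}| = |\hat{\mathcal{D}}| = 2^n$, reindexing the defining sum of $RM$ by $\mathbb{S} \mapsto \sigma(\mathbb{S})$ gives
$$
\hat{RM}_{\sigma(i)} = \sum_{\mathbb{S} \in \mathcal{D}} \hat{\alpha}_{\sigma(i)}(\sigma(\mathbb{S})) \cdot \frac{1}{|\mathcal{D}|} = \sum_{\mathbb{S} \in \mathcal{D}} \alpha_i(\mathbb{S}) \cdot \frac{1}{|\mathcal{D}|} = RM_i,
$$
which is exactly the iso-invariance requirement under the identification induced by $\sigma$. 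The main obstacle is not any deep estimate but the bookkeeping in the recursive step: verifying that the isomorphism respects the loyal-child relation and setting up a correctly well-founded induction for each of $\alpha^+$ and $\alpha^-$. Once that structural fact is in place, the remainder is routine reindexing.
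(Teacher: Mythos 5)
Your proposal is correct and follows essentially the same route as the paper: the paper's proof is a one-sentence observation that the recursive definition of $RM$ depends only on the {\sc yes}-poset and {\sc no}-poset structure, which is preserved by the isomorphism, and your argument is simply a careful unpacking of that observation (explicit bijection $\sigma$, induction on $|S|$ and $|\bar{S}|$ for $\alpha^+$ and $\alpha^-$, preservation of the loyal-child relation, and reindexing of the final sum). The only blemish is a typo in your base case, where ``$\sigma(i)$ is in $\sigma(\mathbb{S})$'' should read ``$\sigma(i)$ is {\sc yes}-decisive in $\sigma(\mathbb{S})$.''
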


Proofs for this and all subsequent results are presented in the Appendix.
\subsection{The Dummy Postulates}
We say that a player $d$ is a {\em dummy voter} if it is never decisive in any logically possible division.
That is, a division $(S\cup \{d\}, \bar{S}\setminus \{d\})$  yields a {\sc yes}-outcome {\em if and only if} 
the division $(S, \bar{S})$ yields a {\sc yes}-outcome.
(Again adopting the convention of representing a bipartitioned division by its first element in blackboard bold,
we represent the former division as  $\mathbb{S}\cup \mathbbm{d}$ and the latter as $\mathbb{S}$.)

The dummy postulates require, in a simple voting game, that {\em all} dummies have zero a priori voting power,
that {\em only} dummies have zero a priori voting power,
and that adding a dummy to a game has no effect on other players' a priori voting power.
The first dummy postulate is reasonable because a dummy is effectively a non-player;
the second is reasonable because, by definition, non-dummies are decisive in at least one possible division;
and the third is reasonable because, just as changes in the population of (literal) non-players outside of $[n]$ have no impact on players' a priori voting power,
so too should changes in the population of dummies who are members of $[n]$ have no impact.%
\footnote {It would be unreasonable to expect only dummies to have zero voting power in general,
because if only divisions in which the player is unsuccessful have positive probability,
then even a non-dummy player might have zero a posteriori voting power.
It would also be unreasonable to expect adding dummies to have no impact on others' a posteriori power,
because such power might change if their votes are correlated with the added dummy.}

More formally, let $\hat{\mathcal{G}}$ be the game formed by adding a dummy voter $d$ to $\mathcal{G}$.
A measure of voting power $\Psi$ satisfies the {\em dummy postulates} if:%
\footnote{ {\sc (dum-1)} is called the {\em dummy property} and {\sc (dum-3)} the {\em strong dummy property} in \citet[p. 87]{FelMZ98}.
See also \citet[p. 222]{FelM98}, where {\sc (dum-1)} and {\sc (dum-2)} are together called {\em vanishing just for dummies}
and {\sc (dum-3)} is called {\em ignoring dummies}.}
\begin{enumerate}
 \item [{\sc (dum-1)}] If $i$ is a dummy voter, then $\psi_i=0$.
\noindent \item[{\sc (dum-2)}] $\psi_i=0$ only if $i$ is a dummy voter.%
\noindent \item[{\sc (dum-3)}] $\psi_i=\hat{\psi}_i$ \  for all $i\neq d$.
\end{enumerate}


\begin{theorem}\label{thm:dummy-postulate}
RM satisfies the dummy postulates.
\end{theorem}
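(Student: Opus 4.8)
The plan is to dispatch the three postulates separately, leaning on Lemma~\ref{lem:dummy} for \textsc{(dum-3)} and reading \textsc{(dum-1)} and \textsc{(dum-2)} straight off the recursive definition of the efficacy scores. I will use once the elementary fact, provable by a one-line induction on $|S|$, that every efficacy score lies in $[0,1]$; in particular $\alpha_i^+(\mathbb{S}),\alpha_i^-(\mathbb{S})\geq 0$, so $\alpha_i(\mathbb{S})\geq 0$ for all $i$ and $\mathbb{S}$. For \textsc{(dum-3)}, writing $\widehat{RM}_i$ for player $i$'s measure in the augmented game $\hat{\mathcal{G}}$, I would partition the divisions of $[n]\cup\{d\}$ into the pairs $\{\mathbb{S},\mathbb{S}\cup\mathbbm{d}\}$ indexed by $S\subseteq[n]$ and compute
\begin{align*}
\widehat{RM}_i
&=\frac{1}{2^{\,n+1}}\sum_{S\subseteq[n]}\bigl(\hat{\alpha}_i(\mathbb{S})+\hat{\alpha}_i(\mathbb{S}\cup\mathbbm{d})\bigr)
=\frac{1}{2^{\,n+1}}\sum_{S\subseteq[n]}2\,\alpha_i(\mathbb{S})
=\frac{1}{2^{\,n}}\sum_{S\subseteq[n]}\alpha_i(\mathbb{S})=RM_i,
\end{align*}
where the second equality is exactly \eqref{eq:A3}. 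This requires no work beyond the lemma.

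For \textsc{(dum-1)}, suppose $i$ is a dummy, so the decisive branch of each recursion never fires. I would prove $\alpha_i^+(\mathbb{S})=0$ for every $\mathbb{S}$ by induction on $|S|$: the base case $S=\emptyset$ is losing by unanimity, giving $\alpha_i^+(\mathbb{S})=0$; in the step, if $\mathbb{S}$ is losing or $i\notin S$ the score is $0$ by definition, and otherwise $\mathbb{S}$ is winning with $i\in S$, so $\alpha_i^+(\mathbb{S})$ is the average of $\alpha_i^+$ over the loyal children, each of which vanishes by the induction hypothesis. A symmetric argument gives $\alpha_i^-(\mathbb{S})=0$, hence $\alpha_i(\mathbb{S})=0$ for all $\mathbb{S}$ and $RM_i=0$.

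For \textsc{(dum-2)} I would argue by contraposition: a non-dummy $i$ is decisive in some $\mathbb{S}$, say \textsc{yes}-decisive (the \textsc{no}-decisive case being symmetric), whence $\alpha_i^+(\mathbb{S})=1$ and $\alpha_i(\mathbb{S})\geq 1$; since all efficacy scores are nonnegative and $\mathbb{P}(\mathbb{S})=2^{-n}>0$, that single term already forces $RM_i\geq 2^{-n}>0$. The only subtlety — and the one place I expect to take care — is the recursive branch in the \textsc{(dum-1)} induction: I must check it is well defined, i.e.\ that a winning $\mathbb{S}$ with a non-decisive $i\in S$ has at least one loyal child. This holds because if $\mathbb{S}$ had none then every \textsc{yes}-voter would be \textsc{yes}-decisive, contradicting that $i$ is not; so the recursion genuinely bottoms out and the induction goes through.
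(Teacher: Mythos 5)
Your proposal is correct and follows essentially the same route as the paper: Lemma~\ref{lem:dummy} drives \textsc{(dum-3)} via the pairing of divisions $\{\mathbb{S},\mathbb{S}\cup\mathbbm{d}\}$, while \textsc{(dum-1)} and \textsc{(dum-2)} come straight from the recursion (a dummy is never decisive, so all its scores vanish; a non-dummy is \textsc{yes}-decisive somewhere, forcing $RM_i\ge 2^{-n}>0$). Your two refinements --- computing \textsc{(dum-3)} directly with the uniform weight $2^{-(n+1)}$ instead of the paper's detour through general voting-independent distributions, and verifying that a winning division with a successful non-decisive voter has a nonempty set of loyal children so the recursive branch is well defined --- are tightenings of points the paper leaves implicit, not a different argument.
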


\section{The Dominance Postulate}\label{sec:dominance-postulate}
Here we consider the dominance postulate.
For any subset $S\subseteq [n]$ with $i,j\notin S$, we say that player $j$ {\em weakly dominates} player $i$ if
for any winning division $\mathbb{S}\cup \mathbbm{i}$,
the division $\mathbb{S}\cup \mathbbm{j}$ is winning.
A player {\em strictly dominates} another if the former weakly dominates the latter but not vice versa.

A measure of voting power $\Psi$ satisfies the {\em dominance postulate} if:
\begin{enumerate}
 \item[{\sc (dom-1)}] ~$\psi_j\ge \psi_i$ whenever $j$ weakly dominates $i$, and
\noindent \item[{\sc (dom-2)}] ~$\psi_j> \psi_i$ whenever $j$ strictly dominates $i$.
\end{enumerate}

Thus the dominance postulate holds that a player who is able to replace another player in any successful subset of players 
without compromising the subset's success, and who sometimes can replace that player in an unsuccessful subset 
and render it successful, ought to have greater a priori voting power than the latter (and that if two players can each replace 
the other in any successful subset without affecting the outcome, they ought to have equal voting power).
This is reasonable to expect of any measure of a priori voting power because a (strictly) dominant player is just as 
effective as the dominated player (and then some).%
\footnote{\citeauthor{FelM98}'s \citeyearpar[p. 244]{FelM98} formulation corresponds to {\sc (dom-2)}.
Our formulation is stronger,
because {\sc (dom-1)} requires that the voting power of two players who weakly dominate each other be equal.
\citeauthor{LarV05a}'s \citeyearpar{LarV05a} formulation is even weaker than Feslenthal and Machover's,
because the former merely requires that, if $j$ strictly dominates $i$, then $j$'s voting power not be less than $i$'s.
It therefore allows two players, one of whom strictly dominates another, to have equal voting power.}

 \begin{theorem}\label{thm:dominance}
RM satisfies the dominance postulate.
\end{theorem}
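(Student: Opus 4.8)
The plan is to reduce the theorem to the two preceding lemmas by a term-by-term comparison over a convenient partition of the division set. Since we work with the uniform a priori distribution, $RM_k=\frac{1}{2^n}\sum_{\mathbb{S}\in\mathcal{D}}\alpha_k(\mathbb{S})$, so it suffices to show $\sum_{\mathbb{S}}\alpha_i(\mathbb{S})\le\sum_{\mathbb{S}}\alpha_j(\mathbb{S})$, with strict inequality under strict dominance. Because $i$ and $j$ are distinct, I would partition $\mathcal{D}$ into quadruples indexed by the subsets $T\subseteq[n]\setminus\{i,j\}$, each quadruple consisting of the four divisions $\mathbb{T}$, $\mathbb{T}\cup\mathbbm{i}$, $\mathbb{T}\cup\mathbbm{j}$, and $\mathbb{T}\cup\mathbbm{\{i,j\}}$, and prove the inequality within each quadruple; since there are $2^{n-2}$ quadruples of four divisions each, this is an exact partition of $\mathcal{D}$.

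Next I would invoke the elementary facts that $\alpha^+_k(\mathbb{S})=0$ whenever $k\notin S$ and $\alpha^-_k(\mathbb{S})=0$ whenever $k\in S$, so that $\alpha_k(\mathbb{S})$ collapses to $\alpha^+_k(\mathbb{S})$ when $k$ votes {\sc yes} and to $\alpha^-_k(\mathbb{S})$ when $k$ votes {\sc no}. For a fixed $T$, this reduces the contribution of player $i$ to $\alpha^+_i(\mathbb{T}\cup\mathbbm{\{i,j\}})+\alpha^+_i(\mathbb{T}\cup\mathbbm{i})+\alpha^-_i(\mathbb{T})+\alpha^-_i(\mathbb{T}\cup\mathbbm{j})$ and the contribution of player $j$ to $\alpha^+_j(\mathbb{T}\cup\mathbbm{\{i,j\}})+\alpha^+_j(\mathbb{T}\cup\mathbbm{j})+\alpha^-_j(\mathbb{T})+\alpha^-_j(\mathbb{T}\cup\mathbbm{i})$. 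I would then bound the four $i$-terms by the four $j$-terms one at a time. Lemma~\ref{lem:dominance-plus} (with $S=T$) gives $\alpha^+_i(\mathbb{T}\cup\mathbbm{\{i,j\}})\le\alpha^+_j(\mathbb{T}\cup\mathbbm{\{i,j\}})$ and $\alpha^+_i(\mathbb{T}\cup\mathbbm{i})\le\alpha^+_j(\mathbb{T}\cup\mathbbm{j})$, while Lemma~\ref{lem:dominance-minus} (with $S=T\cup\{i,j\}$, so that $\mathbb{S}\setminus\mathbbm{i}=\mathbb{T}\cup\mathbbm{j}$, $\mathbb{S}\setminus\mathbbm{j}=\mathbb{T}\cup\mathbbm{i}$, and $\mathbb{S}\setminus\mathbbm{\{i,j\}}=\mathbb{T}$) gives $\alpha^-_i(\mathbb{T}\cup\mathbbm{j})\le\alpha^-_j(\mathbb{T}\cup\mathbbm{i})$ and $\alpha^-_i(\mathbb{T})\le\alpha^-_j(\mathbb{T})$. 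Summing these four inequalities yields the quadruple inequality, and summing over all $T$ gives $RM_i\le RM_j$, establishing {\sc (dom-1)}.

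For {\sc (dom-2)} I would exhibit a single quadruple where the inequality is strict. Strict dominance means $j$ weakly dominates $i$ but $i$ does not weakly dominate $j$, so there is a set $S$ with $i,j\notin S$ for which $\mathbb{S}\cup\mathbbm{j}$ wins while $\mathbb{S}\cup\mathbbm{i}$ loses. Taking $T=S$, monotonicity forces $\mathbb{T}$ to lose (it is a subset of the losing $\mathbb{T}\cup\mathbbm{i}$), so $j$ is {\sc yes}-decisive at $\mathbb{T}\cup\mathbbm{j}$ and hence $\alpha^+_j(\mathbb{T}\cup\mathbbm{j})=1$; meanwhile $\mathbb{T}\cup\mathbbm{i}$ is losing, so $\alpha^+_i(\mathbb{T}\cup\mathbbm{i})=0$. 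This single term is strictly improved, and since every remaining term of this quadruple and every other quadruple still satisfies the weak inequality from {\sc (dom-1)}, the total for $j$ strictly exceeds that for $i$, giving $RM_j>RM_i$.

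The main obstacle is getting the pairing right in the mixed divisions: in $\mathbb{T}\cup\mathbbm{i}$ player $i$ is a {\sc yes}-voter and contributes $\alpha^+_i$, whereas in $\mathbb{T}\cup\mathbbm{j}$ player $i$ is a {\sc no}-voter and contributes $\alpha^-_i$, and symmetrically for $j$. Consequently the relevant comparisons are the cross-comparisons $\alpha^+_i(\mathbb{T}\cup\mathbbm{i})$ versus $\alpha^+_j(\mathbb{T}\cup\mathbbm{j})$ and $\alpha^-_i(\mathbb{T}\cup\mathbbm{j})$ versus $\alpha^-_j(\mathbb{T}\cup\mathbbm{i})$, which is precisely the asymmetric form delivered by the two lemmas once $S$ is chosen correctly in each. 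Verifying that the four lemma-instances line up with the four surviving terms, and that no division is double-counted or omitted across the partition, is the only delicate bookkeeping; everything else is routine summation.
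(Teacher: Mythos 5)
Your proposal is correct and follows essentially the same route as the paper: both reduce the theorem to Lemmas~\ref{lem:dominance-plus} and~\ref{lem:dominance-minus} via the same cross-comparisons ($\alpha^+_i(\mathbb{S}\cup\mathbbm{i})$ vs.\ $\alpha^+_j(\mathbb{S}\cup\mathbbm{j})$, etc.) over divisions grouped by the votes of $i$ and $j$, and both obtain strictness from the division witnessing strict dominance. If anything, your bookkeeping is slightly more careful than the paper's -- the explicit quadruple partition, and the observation that $j$ is {\sc yes}-decisive (score exactly $1$) at the witnessing division, where the paper only asserts positivity -- but the underlying argument is the same.
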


\section{The Donation Postulates}\label{sec:donation-postulate}

Next we study the donation postulates.
Specifically, we investigate the consequence of a player $j$ donating or transferring its vote, partially or fully, to player $i$.
But what does it mean for player $j$ to transfer its vote to $i$?
In a weighted voting game, the notion is clear: it simply means $j$ transfers (part of) the weight of its vote to $i$.
For example, in the weighted voting game $\{8:5,4,3,2\}$, the first player could fully transfer its vote's weight of $5$ to the fourth player 
whose vote is then weighted $7$.
Alternatively, the first player could transfer its voting weight to the fourth only partially;
for example, it could transfer a weight of $3$, which would leave a voting weight of $2$ for the former and $5$ for the latter.

However, since not all voting games are weighted voting games, players' votes may not have a weight that could be transferred.
Thus for simple voting games in general, we define a partial or full transfer via the {\sc yes}-successful sets $\hat{\mathcal{W}}$ in 
a modified game $\hat{\mathcal{G}}$.
Let player $j$ {\em fully donate} (or transfer) its vote to player $i$ if, for all $S$ containing neither $i$ nor $j$, 
\begin{align*}
S\cup\{i,j\} \in \hat{\mathcal{W}} &\iff S\cup\{i,j\} \in \mathcal{W} \nonumber\\
S\cup\{i\} \in \hat{\mathcal{W}} &\iff S\cup\{i,j\} \in \mathcal{W} \nonumber\\
S\cup\{j\} \in \hat{\mathcal{W}} &\iff S \in \mathcal{W} \nonumber\\
S \in \hat{\mathcal{W}} &\iff  S \in \mathcal{W} \nonumber
\end{align*}
Intuitively, this construction implies that in $\hat{\mathcal{G}}$ player $i$ has the voting influence 
of $i$ and $j$ together in $\mathcal{G}$, whereas player $j$ has the zero voting influence 
in $\hat{\mathcal{G}}$.
It is easy to verify that $j$ is a dummy voter in $\hat{\mathcal{G}}$;
therefore, $j$ has indeed fully transferred its vote to $i$.


Similarly, let player $j$ {\em partially donate} (or transfer) its vote to player $i$ if, for all $S$ containing neither $i$ nor $j$, 
\begin{align*}
S\cup\{i,j\} \in \hat{\mathcal{W}} &\iff S\cup\{i,j\} \in \mathcal{W} &&&&\nonumber\\
S\cup\{i\} \in \mathcal{W} &\Longrightarrow S\cup\{i\} \in \hat{\mathcal{W}} &\wedge&& S\cup\{i\} \in \hat{\mathcal{W}} &\Longrightarrow S\cup\{i,j\} \in \mathcal{W}\nonumber\\
S\cup\{j\} \in \hat{\mathcal{W}} &\Longrightarrow S\cup\{j\}  \in \mathcal{W}  &\wedge&& S\in \mathcal{W} &\Longrightarrow S\cup\{j\}  \in \hat{\mathcal{W}}\nonumber\\
S \in \hat{\mathcal{W}} &\iff  S \in \mathcal{W} &&&&\nonumber
\end{align*}
To parse this consider the second set of conditions. These state that $i$ cannot be less successful (at $\mathbb{S}\cup \{\mathbbm{i}\})$) after the partial transfer 
from $j$ {\bf but} cannot be more successful than if $j$ had fully transferred its vote. Similarly the third set of conditions state  
that $j$ cannot be more successful (at $\mathbb{S}\cup \{\mathbbm{j}\})$) after the partial transfer 
to $i$ {\bf but} cannot be less successful than if it had fully transferred its vote.

We can now state the donation postulates.
First, consider a modified game $\hat{\mathcal{G}}$ in which player $j$ partially transfers its vote to player $i$.
A measure of voting power $\Psi$ then satisfies the {\em partial-donation postulate}%
\footnote{Roughly equivalent to \citeauthor{FelM98}'s \citeyearpar{FelM98} {\em transfer postulate}.
``Roughly'' because they impose strict inequality, but to do so restrict the postulate to non-dummies.}
 if the a priori voting power of player $i$ in $\hat{\mathcal{G}}$ is at least 
equal to its a priori voting power in $\mathcal{G}$:
\begin{enumerate}
 \item[{\sc (don-1)}] ~$\hat{\psi}_i \ge \psi_i$
\end{enumerate}

Second, consider a modified game $\hat{\mathcal{G}}$ in which $j$ fully transfers its vote to $i$.
A measure of voting power $\Psi$ then satisfies the {\em full-donation postulate} if the a priori voting power of player $i$ in $\hat{\mathcal{G}}$ is at least 
equal to the a priori voting powers of player $i$ {\bf and} of player $j$ in $\mathcal{G}$:
\begin{enumerate}
 \item[{\sc (don-2)}] ~$\hat{\psi}_i \ge \max (\psi_i, \psi_j )$
\end{enumerate}

The full-donation postulate is much stronger than the partial-donation postulate but it
is reasonable to expect a measure of voting power to satisfy it because the 
player to whom a vote is transferred remains at least as effective as it was prior to the transfer (the player 
has not lost anything) and becomes at least as effective as the player who fully transfers its vote (the beneficiary 
gains the entirety of the donor's vote).

\begin{theorem}\label{thm:partial-donation-postulate}
$RM$ satisfies the partial-donation postulate.
\end{theorem}
\begin{theorem}\label{thm:full-donation-postulate}
$RM$ satisfies the full-donation postulate.
\end{theorem}

  \section{The Minimum-Power Bloc Postulate}
Next we consider the minimum-power bloc postulate.
We assume player $i$ and $j$ agree to form an indissoluble {\em bloc};
equivalently, player $i$ annexes the vote of player $j$. 
Again the postulate can be formulated via a modified game $\hat{\mathcal{G}}$.
Because of the annexation, $\hat{\mathcal{G}}$ has one fewer player than the original game $\mathcal{G}$.
Specifically, let $I=\{i,j\}$ denote the bloc player in $\hat{\mathcal{G}}$.
Player $i$ annexes $j$'s vote if, for all $S$ containing neither $i$ nor $j$, 
\begin{align*}
S \in \hat{\mathcal{W}} &\iff  S \in \mathcal{W} \nonumber\\
S\cup\{I\} \in \hat{\mathcal{W}} &\iff S\cup\{i, j\} \in \mathcal{W} \nonumber
\end{align*}
A measure of voting power $\Psi$ then satisfies the {\em minimum-power bloc postulate} if the a priori voting power of bloc player $I=\{i,j\}$ 
in $\hat{\mathcal{G}}$ is at least the a priori voting power of both player $i$ and of player $j$ in $\mathcal{G}$:
\begin{enumerate}
 \item[{\sc (bloc-1)}] ~$\hat{\psi}_I \ge \max (\psi_i, \psi_j )$
\end{enumerate}

Thus the minimum-power bloc postulate requires that the bloc's a priori voting power be at least as large as that of it most powerful member.
On the one hand, there is no good reason to expect a bloc's voting power to be equal to the sum of the power of its individual members
\citep[pp. 226-27]{FelM98}.
On the other hand, it is reasonable to expect satisfaction of the minimum-power bloc postulate,
because a bloc can do everything its most powerful member can \citep{AbiV22b}.
The postulate is justified for reasons similar to those justifying the donation postulate.
\begin{theorem}\label{thm:bloc-postulate}
$RM$ satisfies the minimum-power bloc postulate.
\end{theorem}

\noindent This result is not surprising.
As \citet{FelM98} explain, any measure satisfying the donation postulate (or transfer postulate)
will satisfy the minimum-power bloc and dominance postulates.

\section{The Quarrel Postulate}\label{sec:quarrel-postulate}

It would be paradoxical if ruling out effective cooperation between two players were to somehow increase their individual voting power.
A measure of voting power that displays such a ``quarrelling paradox'' can be said to violate a {\em quarrel postulate}.
It is reasonable to expect a measure of voting power to satisfy such a postulate because, manifestly,
one way that players effectively realize outcomes is by joining forces and voting together.

Yet previous attempts, by \citet[p. 237]{FelM98} and \citet[p. 30]{LarV05a},
to formulate a reasonable quarrel postulate for a priori voting power come up short.
This is because these formulations are not based on a conception of quarrelling that compares
voting games both of which respect {\em monotonicity}.
It would be unreasonable to expect a measure of voting power to satisfy a quarrel postulate for voting games that violate monotonicity,
because any paradoxical results may stem from violations of monotonicity rather than the unreasonability of the measure itself \citep{AbiV22a}.
In addition, Felsenthal and Machover's conception is asymmetric: it is a conception of quarrelling on the {\sc yes} side only.
While it may be reasonable to expect a measure of {\sc yes}-voting power to satisfy an asymmetric {\sc yes}-quarrel postulate,
it is unreasonable to expect a measure of a player's total voting power to do so,
for the simple reason that a {\sc yes}-quarrel may diminish a player's {\sc yes}-voting power but increase its {\sc no}-voting power.

We therefore base our quarrel postulate on a new conception of quarrelling that not only adequately captures
the intuitive idea of a quarrel, but is also symmetric and monotonic.%
\footnote{For a full defence of this conception and postulate, see \citet{AbiV22a}.}
In particular, we say player $i$ has a weak, symmetric quarrel with player $j$ if
{\bf both} $\mathbb{S}\cup\mathbbm{\{i,j\}}$ wins if and only if either $\mathbb{S}\cup\mathbbm{i}$ or $\mathbb{S}\cup\mathbbm{j}$ wins
{\bf and} $\mathbb{S}$ loses if and only if either $\mathbb{S}\cup\mathbbm{i}$ or $\mathbb{S}\cup\mathbbm{j}$ loses.
Thus it cannot be the case that $\mathbb{S}\cup\mathbbm{i}$ and $\mathbb{S}\cup\mathbbm{j}$ both lose and $\mathbb{S}\cup\mathbbm{\{i,j\}}$ wins;
similarly it cannot be the case that $\mathbb{S}\cup\mathbbm{i}$ and $\mathbb{S}\cup\mathbbm{j}$ win but $\mathbb{S}$ loses.
(Informally this would be $i$ and $j$ effectively cooperating and doing better than they can individually.)
To formulate the postulate,
we begin with a monotonic binary game $\mathcal{G}$,
and derive a game $\hat{\mathcal{G}}$ from $\mathcal{G}$ by inducing a quarrel between $i$ and $j$ in the specified sense.
We specify the transformation rule in terms of the {\sc yes}-successful sets $\hat{\mathcal{W}}$ and $\mathcal{W}$ for $\hat{\mathcal{G}}$ and $\mathcal{G}$, respectively.
For any $S$ containing neither $i$ nor $j$,
\begin{align*}
S\cup \{i,j\} \in \hat{\mathcal{W}} &\iff S\cup \{i\} \in \mathcal{W} \ \vee\  S\cup \{j\}  \in \mathcal{W}\\
S \in \hat{\mathcal{W}} &\iff S\cup \{i\} \in \mathcal{W} \ \wedge\  S\cup \{j\}  \in \mathcal{W}\\
S\cup \{i\} \in \hat{\mathcal{W}} & \iff  S\cup \{i\} \in \mathcal{W} \\
S\cup \{j\} \in \hat{\mathcal{W}} & \iff  S\cup \{j\} \in \mathcal{W}
\end{align*}

\noindent The first two properties state that if $i$ and $j$ vote on the same side ({\sc yes} and {\sc no}, respectively),
then the quarrel implies that a group of players with $i$ and $j$ on their side can be no more successful in $\hat{\mathcal{G}}$
than that group was with either $i$ or $j$ separately on their side in $\mathcal{G}$.
The last two properties state that the quarrel has no effect if $i$ and $j$ vote on opposite sides.

\begin{theorem}\label{thm:quarrel-monotonicity}
The modified game $\hat{\mathcal{G}}$ is monotonic.
\end{theorem}
\noindent Because this transformation rule results in a monotonic game in which $i$ and $j$ quarrel,
we should reasonably expect the a priori voting power of $i$ and $j$ not to increase.
This intuition is captured in the quarrel postulate.
A measure of voting power $\Psi$ satisfies the standard {\em quarrel postulate}
if the a priori voting power of $i$ and $j$ is not greater in $\hat{\mathcal{G}}$ than in $\mathcal{G}$:
\begin{enumerate}
 \item[{\sc (quar-1)}] ~$\hat{\psi}_i\le \psi_i$, and
\noindent \item[{\sc (quar-2)}] ~$\hat{\psi}_j\le \psi_j$.
\end{enumerate}

\begin{theorem}\label{thm:quarrel-postulate}
RM satisfies the standard quarrel postulate.
\end{theorem}

\section{Conclusion}\label{sec:conc}

This completes our analysis.
We have motivated our particular construction of a recursive measure of voting power with partial efficacy
via the lattice representation of voting games and the concept of random walks in stochastic processes,
and proven that $RM$ satisfies six sets of postulates, namely,
the iso-invariance, dummy, dominance, donation, minimum-power bloc, and quarrel postulates.
(In other, complementary work, we show that $RM$ satisfies five further, ``blocker'' postulates,
including two subadditivity blocker postulates, two minimum-power blocker postulates, and an added-blocker postulate \citep{AbiV22b}.)
Since it has elsewhere been argued that a recursive measure incorporating partial efficacy intuitively captures the core concept of voting power
better than rival measures that do not \citep{Abi22},
we conclude that $RM$ is a reasonable measure of voting power by the lights of the two-pronged approach to justification.

\newpage

\section*{Appendix}

This appendix contains proofs of all results presented in the paper.

\subsection{Proofs for Section~4}

\restatetheorem{thm:iso-postulate}
   RM satisfies the iso-invariance postulate.
\end{theorem}
\begin{proof}
This follows immediately by the recursive definition of $RM$, since the {\sc yes}-posets and {\sc no}-posets for player $i$ are identical in 
$\mathcal{G}$ and $\hat{\mathcal{G}}$.
\end{proof}

To prove that $RM$ satisfies the three dummy postulates, we use the following lemma.
\begin{customlemma}{A.1}\label{lem:dummy}
Let $\hat{\alpha}$ be the players' efficacy scores in the new game $\hat{\mathcal{G}}$ formed by the addition of a dummy voter $d$ to $\mathcal{G}$.
Then, for any subset of players $S$, $d\notin {S}$, the $RM$ efficacy scores of any player~$i$ satisfy:
\begin{align*}
\hat{\alpha}_i^+(\mathbb{S}) &=\hat{\alpha}_i^+(\mathbb{S}\cup\mathbbm{d})=\alpha_i^+(\mathbb{S}) \tag{A1} \label{eq:A1}\\
\hat{\alpha}_i^-(\mathbb{S}) &=\hat{\alpha}_i^-(\mathbb{S}\cup\mathbbm{d})=\alpha_i^-(\mathbb{S}) \tag{A2} \label{eq:A2}\\
\hat{\alpha}_i(\mathbb{S}) &=\hat{\alpha}_i(\mathbb{S}\cup\mathbbm{d})=\alpha_i(\mathbb{S}) \tag{A3} \label{eq:A3}
\end{align*}
\end{customlemma}
\begin{proof}
Take any player $i$ and division $\mathbb{S}$ with {\sc yes}-efficacy score $\alpha_i^+(\mathbb{S})$ in the game $\mathcal{G}$. 
Suppose a dummy $d$ is added to create a new game $\hat{\mathcal{G}}$.

First, we want to show that  $\hat{\alpha}_i^+(\mathbb{S})=\hat{\alpha}_i^+(\mathbb{S}\cup\mathbbm{d})=\alpha_i^+(\mathbb{S})$.
The key fact is that $i$ is {\sc yes}-decisive at $\mathbb{S}$ in $\mathcal{G}$ if and only if it is {\sc yes}-decisive at 
both $\mathbb{S}$ and $\mathbb{S}\cup\mathbbm{d}$ in 
$\hat{\mathcal{G}}$.
It follows that $\hat{\alpha}_i^+(\mathbb{S})=\alpha_i^+(\mathbb{S})$, since the loyal descendants 
of $\mathbb{S}$ in the {\sc yes}-poset of the modified game form a sub-poset identical to the corresponding sub-poset, 
formed by $\mathbb{S}$'s loyal descendants, in the {\sc yes}-poset of the original game.
The properties of the corresponding random walks are then identical and so 
\begin{equation}\label{eq:Dum1}
\hat{\alpha}_i^+(\mathbb{S})=\alpha_i^+(\mathbb{S})
\end{equation}

Now we must show that $\hat{\alpha}_i^+(\mathbb{S}\cup\mathbbm{d})=\alpha_i^+(\mathbb{S})$. 
We proceed by induction on the cardinality of $S$. For the base case, $S=\emptyset$.
Now, by unanimity, $\mathbb{S}=(\emptyset, [\hat{n}])$ is losing. Since $d$ is a dummy it follows that 
the division $\mathbbm{d}$ is also losing.
Thus $\hat{\alpha}_i^+(\mathbb{S}\cup \mathbbm{d})=0=\alpha_i^+(\mathbb{S})$.

For the induction step, if $i\notin S$ or $\mathbb{S}\cup\mathbbm{d}$ is losing, 
then $\hat{\alpha}_i^+(\mathbb{S}\cup \mathbbm{d})=0=\alpha_i^+(\mathbb{S})$.
So assume $i\in S$ and $\mathbb{S}\cup\mathbbm{d}$ is a winning division.
Then
\begin{eqnarray}
\hat{\alpha}^+_i(\mathbb{S}\cup\mathbbm{d}) 
&=& \frac{1}{|\hat{LC}(\mathbb{S}\cup\mathbbm{d})|}\cdot  \sum_{\hat{\mathbb{S}}\in LC(\mathbb{S}\cup \mathbbm{d})}\hat{\alpha}^+_i(\mathbb{S}\cup\mathbbm{d\setminus k}) \nonumber \\
&=& \frac{1}{|\hat{LC}(\mathbb{S}\cup\mathbbm{d})|}\cdot  \sum_{k\in S\cup\{d\}} \hat{\alpha}^+_i(\mathbb{S}\cup\mathbbm{d\setminus k}) \nonumber \\
&=& \frac{1}{|\hat{LC}(\mathbb{S}\cup\mathbbm{d})|}\cdot \left( \sum_{k\in S} \hat{\alpha}^+_i(\mathbb{S}\cup\mathbbm{d\setminus k}) 
	+ \hat{\alpha}^+_i(\mathbb{S}) \right)\nonumber \\
&=& \frac{1}{|\hat{LC}(\mathbb{S}\cup\mathbbm{d})|}\cdot  \hat{\alpha}^+_i(\mathbb{S})  
	+ \frac{1}{|\hat{LC}(\mathbb{S}\cup\mathbbm{d})|}\cdot  \sum_{k\in S} \hat{\alpha}^+_i(\mathbb{S}\cup\mathbbm{d\setminus k})\nonumber \\
&=& \frac{1}{|\hat{LC}(\mathbb{S}\cup\mathbbm{d})|}\cdot  \alpha^+_i(\mathbb{S})  
	+ \frac{1}{|\hat{LC}(\mathbb{S}\cup\mathbbm{d})|}\cdot  \sum_{k\in S} \hat{\alpha}^+_i(\mathbb{S}\cup\mathbbm{d\setminus k})\nonumber \\
&=& \frac{1}{|\hat{LC}(\mathbb{S}\cup\mathbbm{d})|}\cdot  \alpha^+_i(\mathbb{S})  
	+ \frac{1}{|\hat{LC}(\mathbb{S}\cup\mathbbm{d})|}\cdot  \sum_{k\in S} \alpha^+_i(\mathbb{S}\setminus \mathbbm{k})\nonumber 
 \end{eqnarray}
We remark that the second equality holds because $\hat{\alpha}^+_i(\mathbb{S}\cup\mathbbm{d\setminus k})=0$ for any non-loyal child of
$\mathbb{S}\cup\mathbbm{d}$.
The fourth equality follows from (\ref{eq:Dum1}). The fifth equality follows by the induction hypothesis.
Now $|\hat{LC}(\mathbb{S}\cup\mathbbm{d})| =|\hat{LC}(\mathbb{S})|+1$ because $d$ is a dummy voter. Thus
\begin{eqnarray}
\hat{\alpha}^+_i(\mathbb{S}\cup\mathbbm{d}) 
&=& \frac{1}{|\hat{LC}(\mathbb{S}\cup\mathbbm{d})|}\cdot  \alpha^+_i(\mathbb{S})  + \frac{|\hat{LC}(\mathbb{S})|}{|\hat{LC}(\mathbb{S}\cup\mathbbm{d})|}
	\cdot \frac{1}{|\hat{LC}(\mathbb{S})|}\cdot  \sum_{k\in S} \alpha^+_i(\mathbb{S}\setminus \mathbbm{k})\nonumber \\
&=& \frac{1}{|\hat{LC}(\mathbb{S}\cup\mathbbm{d})|}\cdot  \alpha^+_i(\mathbb{S})  
	+ \frac{|\hat{LC}(\mathbb{S})|}{|\hat{LC}(\mathbb{S}\cup\mathbbm{d})|}\cdot \alpha^+_i(\mathbb{S})\nonumber \\
&=& \frac{1}{|\hat{LC}(\mathbb{S}\cup\mathbbm{d})|}\cdot  \alpha^+_i(\mathbb{S})  
	+ \left(1-\frac{1}{|\hat{LC}(\mathbb{S}\cup\mathbbm{d})|}\right) \cdot \alpha^+_i(\mathbb{S})\nonumber \\
&=& \alpha^+_i(\mathbb{S})  \nonumber
 \end{eqnarray}
 Thus (\ref{eq:A1}) holds. 
 A symmetric argument applies to show $\hat{\alpha}^-(\mathbb{S})=\hat{\alpha}^-(\mathbb{S}\cup\mathbbm{d})=\alpha^-(\mathbb{S})$ and thus (\ref{eq:A2}).
Summing (\ref{eq:A1}) and (\ref{eq:A2}), we obtain $\hat{\alpha}(\mathbb{S}) =\hat{\alpha}(\mathbb{S}\cup\mathbbm{d})=\alpha(\mathbb{S})$. So (\ref{eq:A3}) holds.
\end{proof}

\restatetheorem{thm:dummy-postulate}
RM satisfies the dummy postulates.
\end{theorem}
\begin{proof}
Take any winning division $\mathbb{S}$ where $d\in S$. Since $d$ is a dummy, it follows that $\mathbb{S}\setminus \mathbbm{d}$ is also a winning division.
Thus, $d$ is never decisive and recursively we have that $\alpha^+_d(\mathbb{S})=0$. Similarly, $\alpha^-_d(\mathbb{S})=0$.
Hence, $\alpha_d(\mathbb{S})=\alpha^+_d(\mathbb{S})+\alpha^-_d(\mathbb{S})=0$ and so a dummy voter's efficacy score is zero at any division $\mathbb{S}$.
Property ({\sc dum-1}) then holds since
\begin{equation*}
RM'_d = \frac{1}{2^n}\cdot\sum_{S\in \mathcal{D}} \alpha_d(\mathbb{S}) =0\\
 \end{equation*}

Now assume player $i$ is not a dummy. Then there exists a division $\mathbb{T}$ at which $i$ is {\sc yes}-decisive. So, by definition, $\alpha^+_i(\mathbb{T}) =1$. 
Property ({\sc dum-2}) then holds since
\begin{equation*}
RM'_i 
\ =\  \frac{1}{2^n}\cdot\sum_{S\in \mathcal{D}} \alpha_i(\mathbb{S}) 
\ \ge\ \frac{1}{2^n}\cdot\sum_{S\in \mathcal{D}} \alpha^+_i(\mathbb{S}) 
\ \ge\ \frac{1}{2^n}\cdot \alpha^+_i(\mathbb{T}) 
\ =\ \frac{1}{2^n}\
\ >\ 0
 \end{equation*}
Now consider property ({\sc dum-3}). For any voting-independent probability distribution $\mathbb{P}$, when a dummy voter $d$ is added to the game we have that
\begin{align*}
\hat{RM}_i 
&=\sum_{\mathbb{S}\in \hat{\mathcal{D}}: d\in S} \hat{\alpha}_i(\mathbb{S})\cdot \hat{\mathbb{P}}(\mathbb{S})
	+ \sum_{\mathbb{S}\in \hat{\mathcal{D}}: d\notin S} \hat{\alpha}_i(\mathbb{S})\cdot \hat{\mathbb{P}}(\mathbb{S})\\
&=\sum_{\mathbb{S}\in \hat{\mathcal{D}}: d\in S} \alpha_i(\mathbb{S}\setminus\mathbbm{d})\cdot \hat{\mathbb{P}}(\mathbb{S})
 	+ \sum_{\mathbb{S}\in \hat{\mathcal{D}}: d\notin S} \alpha_i(\mathbb{S})\cdot \hat{\mathbb{P}}(\mathbb{S})\\
&=\sum_{\mathbb{S}\in \hat{\mathcal{D}}: d\in S} \alpha_i(\mathbb{S})\cdot \mathbb{P}(d\mathrm{\ votes\ yes})\cdot \mathbb{P}(\mathbb{S}\setminus \mathbbm{d})
	+ \sum_{\mathbb{S}\in \hat{\mathcal{D}}: d\notin S} \alpha_i(\mathbb{S})\cdot  \mathbb{P}(d\mathrm{\ votes\ no})\cdot \mathbb{P}(\mathbb{S})\\
&=\sum_{\mathbb{S}\in \mathcal{D}} \alpha_i(\mathbb{S})\cdot \mathbb{P}(d\mathrm{\ votes\ yes})\cdot \mathbb{P}(\mathbb{S})
	+ \sum_{\mathbb{S}\in \mathcal{D}} \alpha_i(\mathbb{S})\cdot  \mathbb{P}(d\mathrm{\ votes\ no})\cdot \mathbb{P}(\mathbb{S})\\
&=\sum_{\mathbb{S}\in \mathcal{D}} \alpha_i(\mathbb{S})\cdot \mathbb{P}(\mathbb{S})\\
&= RM_i
\end{align*}
The second equality holds by Lemma~\ref{lem:dummy}.
The third equality follows from voting independence in $\mathbb{P}$.
Thus $RM$ satisfies ({\sc dum-3}) for any voting-independent probability distribution.
Now, our interest lies in the equiprobable division distribution
$\mathbb{P}=\frac{1}{|\mathcal{D}|}=\frac{1}{2^n}$. We remark that 
an equiprobable division distribution could violate voting independence, but
in such cases the probability distribution of divisions is identical
to an equiprobable distribution that does satisfy voting independence (and equiprobable voting).
Therefore, the proof still applies and ({\sc dum-3}) is also satisfied for a priori $RM'$.
\end{proof}

\subsection*{Proofs for Section~5}
To prove that $RM$ satisfies the dominance postulate, we use the following two lemmas.

\begin{customlemma}{A.2}\label{lem:dominance-plus}
For any subset $S\subseteq [n]$ with $i,j\notin S$, 
we have $\alpha^+_i(\mathbb{S}\cup \mathbbm{i})\le \alpha^+_j(\mathbb{S}\cup \mathbbm{j})$ 
and $\alpha^+_i(\mathbb{S}\cup \mathbbm{\{i,j\}})\le \alpha^+_j(\mathbb{S}\cup \mathbbm{\{i,j\}})$ 
whenever $j$ weakly dominates $i$.
\end{customlemma}
\begin{proof}
The proof is by induction on the cardinality of $S$. 
For the base case, $S=\emptyset$ and $|S|=0$.
By the unanimity condition, $\emptyset\notin \mathcal{W}$, so $\mathbb{S}$ is losing.
If $\mathbbm{i}=(\{i\}, [n]\setminus \{i\})$ is losing then $\alpha^+_i(\mathbbm{i})=0\le  \alpha^+_j(\mathbbm{j})$.
If $\mathbbm{i}$ is winning then, by dominance, $\mathbbm{j}$ is also winning. But then $i$ is {\sc yes}-decisive at $\mathbbm{i}$ and 
$j$ is {\sc yes}-decisive at $\mathbbm{j}$. Hence $\alpha^+_i(\mathbbm{i})=1= \alpha^+_j(\mathbbm{j})$. Thus in both cases
$\alpha^+_i(\mathbbm{i})\le \alpha^+_j(\mathbbm{j})$ as required.

Next, if $\mathbbm{\{i,j\}}$ is losing then $\alpha^+_i(\mathbbm{\{i,j\}})=0=  \alpha^+_j(\mathbbm{\{i,j\}})$.
So we may assume $\mathbbm{\{i,j\}}$ is winning. We have two cases to consider.
First, if $\mathbbm{i}$ is losing then $j$ is {\sc yes}-decisive at $\mathbbm{\{i,j\}}$ and $\alpha^+_j(\mathbbm{\{i,j\}})=1\ge  \alpha^+_i(\mathbbm{\{i,j\}})$.
Second, if $\mathbbm{i}$ is winning then, by dominance, $\mathbbm{j}$ is also winning. Thus $\mathbbm{i}$ 
and $\mathbbm{j}$ are both loyal children of $\mathbbm{\{i,j\}}$. It follows that $\alpha^+_i(\mathbbm{i})=1= \alpha^+_j(\mathbbm{\{i,j\}})$ 
and $\alpha^+_i(\mathbbm{\{i,j\}})=\frac12= \alpha^+_j(\mathbbm{\{i,j\}})$. Thus, $\alpha^+_i(\mathbbm{\{i,j\}})\le \alpha^+_j(\mathbbm{\{i,j\}})$ as required.

For the induction step, let's begin by showing 
that $\alpha^+_i(\mathbb{S}\cup \mathbbm{i})\le \alpha^+_j(\mathbb{S}\cup \mathbbm{j})$ for any subset $S\subseteq [n]$ with $i,j\notin S$.
First, assume $\mathbb{S}$ is a losing division. If $\mathbb{S}\cup \mathbbm{i}$ is losing 
then $\alpha^+_i(\mathbb{S}\cup \mathbbm{i})=0 \le \alpha^+_j(\mathbb{S}\cup \mathbbm{j})$.
If $\mathbb{S}\cup \mathbbm{i}$ is winning then $\mathbb{S}\cup \mathbbm{j}$ is also winning since player $j$ dominates player $i$.
Thus $i$ and $j$ are {\sc yes}-decisive in $\mathbb{S}\cup \mathbbm{i}$ and $\mathbb{S}\cup \mathbbm{j}$, respectively, and so, 
by definition, $\alpha^+_i(\mathbb{S}\cup \mathbbm{i})=1= \alpha^+_j(\mathbb{S}\cup \mathbbm{j})$.

Second, assume $\mathbb{S}$ is a winning division.
Then by monotonicity $\mathbb{S}\cup \mathbbm{i}$ and $\mathbb{S}\cup \mathbbm{j}$ are winning; but neither $i$ or $j$ are {\sc yes}-decisive.
Thus recursively we have
\begin{eqnarray}
\alpha^+_j(\mathbb{S}\cup \mathbbm{j}) 
&=& \frac{1}{|LC(\mathbb{S}\cup \mathbbm{j})|}\cdot \sum_{k\in S\cup \{j\}} \alpha^+_j(\mathbb{S}\cup \mathbbm{j\setminus k})  \nonumber  \\
&=& \frac{1}{|LC(\mathbb{S}\cup \mathbbm{j})|}\cdot \left( \sum_{k\in S} \alpha^+_j(\mathbb{S}\cup \mathbbm{j\setminus k}) + \alpha^+_j(\mathbb{S}) \right)\nonumber 
\end{eqnarray}

Now let $k\in S_1\subseteq S$ if $\mathbb{S}\cup \mathbbm{j\setminus k}$ and $\mathbb{S}\cup \mathbbm{i\setminus k}$ both win.
Let $k\in S_2\subseteq S$ if $\mathbb{S}\cup \mathbbm{j\setminus k}$ wins and $\mathbb{S}\cup \mathbbm{i\setminus k}$ loses.
Let $k\in S_3\subseteq S$ if $\mathbb{S}\cup \mathbbm{j\setminus k}$ and $\mathbb{S}\cup \mathbbm{i\setminus k}$ both lose.
Observe that, by dominance, there does not exist $k\in S$ such that $\mathbb{S}\cup \mathbbm{j\setminus k}$ loses 
and $\mathbb{S}\cup \mathbbm{i\setminus k}$ wins.
Thus $S=S_1\cup S_2\cup S_3$ and
\begin{eqnarray}
\alpha^+_j(\mathbb{S}\cup \mathbbm{j}) &=& \frac{1}{|LC(\mathbb{S}\cup \mathbbm{j})|}
	\cdot \left( \sum_{k\in S_1} \alpha^+_j(\mathbb{S}\cup \mathbbm{j\setminus k})
	+\sum_{k\in S_2} \alpha^+_j(\mathbb{S}\cup \mathbbm{j\setminus k}) + \alpha^+_j(\mathbb{S}) \right)\nonumber \\
 &\ge& \frac{1}{|LC(\mathbb{S}\cup \mathbbm{j})|}\cdot \left( \sum_{k\in S_1} \alpha^+_i(\mathbb{S}\cup \mathbbm{i\setminus k})
 	+\sum_{k\in S_2} \alpha^+_j(\mathbb{S}\cup \mathbbm{j\setminus k}) 
 	+ \alpha^+_j(\mathbb{S}) \right)\nonumber \\
  &=& \frac{1}{|S_1|+|S_2|+1}\cdot \left( \sum_{k\in S_1} \alpha^+_i(\mathbb{S}\cup \mathbbm{i\setminus k})
  	+\sum_{k\in S_2} \alpha^+_j(\mathbb{S}\cup \mathbbm{j\setminus k}) + \alpha^+_j(\mathbb{S}) \right)\nonumber 
\end{eqnarray}
Here the inequality follows from the induction hypothesis.
Now for $k\in S_2$ we have that $\mathbb{S}\cup \mathbbm{i\setminus k}$ loses.
By monotonicity this implies that $\mathbb{S}\setminus \mathbbm{k}$ also loses.
But since $k\in S_2$, we have that $\mathbb{S}\cup \mathbbm{j\setminus k}$ wins.
Thus $j$ is {\sc yes}-decisive at $\mathbb{S}\cup \mathbbm{j\setminus k}$.
In particular, $\alpha^+_j(\mathbb{S}\cup \mathbbm{j\setminus k})=1$.
So
\begin{eqnarray}\label{eq:dom1}
\alpha^+_j(\mathbb{S}\cup \mathbbm{j}) 
 &\ge&  \frac{1}{|S_1|+|S_2|+1}\cdot \left( \sum_{k\in S_1} \alpha^+_i(\mathbb{S}\cup \mathbbm{i\setminus k})
 	+|S_2| + \alpha^+_j(\mathbb{S}) \right)\nonumber \\
 &\ge&  \min \left[ \frac{1}{|S_2|+1}\cdot \left( \sum_{k\in S_1} \alpha^+_i(\mathbb{S}\cup \mathbbm{i\setminus k}) 
 	+ \alpha^+_j(\mathbb{S}) \right) , \frac{1}{|S_2|}\cdot |S_2| \right]  \nonumber \\
  &=&  \min \left[ \frac{1}{|S_2|+1}\cdot \left( \sum_{k\in S_1} \alpha^+_i(\mathbb{S}\cup \mathbbm{i\setminus k}) 
  	+ \alpha^+_j(\mathbb{S}) \right) , 1 \right] 
 \nonumber \\
  &=&  \frac{1}{|S_2|+1}\cdot \left( \sum_{k\in S_1} \alpha^+_i(\mathbb{S}\cup \mathbbm{i\setminus k}) + \alpha^+_j(\mathbb{S}) \right)
\end{eqnarray}

Here the second inequality follows from the mathematical fact that $\frac{A+B}{C+D}\ge \min\left(\frac{A}{C}, \frac{B}{D}\right)$ 
for positive numbers $A,B,C,D$.
On the other hand 
\begin{eqnarray}\label{eq:dom2}
\alpha^+_i(S\cup \{i\}) 
 &=& \frac{1}{|LC(\mathbb{S}\cup\mathbbm{i})|}\cdot \left( \sum_{k\in S} \alpha^+_i(\mathbb{S}\cup \mathbbm{i\setminus k}) + \alpha^+_i(\mathbb{S}) \right)\nonumber \\
 &=& \frac{1}{|LC(\mathbb{S}\cup\mathbbm{i})|}\cdot \left( \sum_{k\in S_1} \alpha^+_i(\mathbb{S}\cup \mathbbm{i\setminus k}) + \alpha^+_i(\mathbb{S}) \right)\nonumber \\
 &=& \frac{1}{|S_1|+1}\cdot \left( \sum_{k\in S_1} \alpha^+_i(\mathbb{S}\cup \mathbbm{i\setminus k})+ \alpha^+_i(\mathbb{S}) \right)
 \end{eqnarray}
Together, (\ref{eq:dom1}) and (\ref{eq:dom2}) imply that 
\begin{equation}\label{eq:SiSj}
\alpha^+_i(\mathbb{S}\cup \mathbbm{i}) \le \alpha^+_j(\mathbb{S}\cup \mathbbm{j}) 
\end{equation}
as desired.

Next let's show that $\alpha^+_i(\mathbb{S}\cup \mathbbm{\{i,j\}})\le  \alpha^+_j(\mathbb{S}\cup \mathbbm{\{i,j\}})$.
\begin{eqnarray}\label{eq:both}
\alpha^+_i(S\cup \{i,j\}) 
&=& \frac{1}{|LC(\mathbb{S}\cup \mathbbm{\{i,j\}}|}\cdot \sum_{k\in S\cup \{i,j\}} \alpha^+_i(\mathbb{S}\cup \mathbbm{\{i,j\}\setminus k})\nonumber \\
&=& \frac{1}{|LC(\mathbb{S}\cup \mathbbm{\{i,j\}}|}\cdot  \sum_{k\in S} \alpha^+_i(\mathbb{S}\cup \mathbbm{\{i,j\}\setminus k}) 
	+ \alpha^+_i(\mathbb{S}\cup \mathbbm{i}) + \alpha^+_i(\mathbb{S}\cup \mathbbm{j}) \nonumber \\
&=&\frac{1}{|LC(\mathbb{S}\cup \mathbbm{\{i,j\}}|}\cdot  \sum_{k\in S} \alpha^+_i(\mathbb{S}\cup \mathbbm{\{i,j\}\setminus k}) 
	+ \alpha^+_i(\mathbb{S}\cup \mathbbm{i}) + 0\nonumber \\
&\le& \frac{1}{|LC(\mathbb{S}\cup \mathbbm{\{i,j\}}|}\cdot \sum_{k\in S} \alpha^+_i(\mathbb{S}\cup \mathbbm{\{i,j\}\setminus k}) 
	+ \alpha^+_j(\mathbb{S}\cup \mathbbm{j})\nonumber \\
&\le& \frac{1}{|LC(\mathbb{S}\cup \mathbbm{\{i,j\}}|}\cdot  \sum_{k\in S} \alpha^+_j(\mathbb{S}\cup \mathbbm{\{i,j\}\setminus k}) 
	+ \alpha^+_j(\mathbb{S}\cup \mathbbm{j})\nonumber \\
&=&\frac{1}{|LC(\mathbb{S}\cup \mathbbm{\{i,j\}}|}\cdot \sum_{k\in S} \alpha^+_j(\mathbb{S}\cup \mathbbm{\{i,j\}\setminus k}) 
	+ \alpha^+_j(\mathbb{S}\cup \mathbbm{j})+ \alpha^+_j(\mathbb{S}\cup \mathbbm{i})\nonumber \\
&=& \frac{1}{|LC(\mathbb{S}\cup \mathbbm{\{i,j\}}|}\cdot \sum_{k\in S\cup \{i,j\}} \alpha^+_j(\mathbb{S}\cup \mathbbm{\{i,j\}\setminus k})\nonumber \\
&=& \alpha^+_j(\mathbb{S}\cup \mathbbm{\{i,j\}}) \nonumber
\end{eqnarray}
Here the first inequality follows from (\ref{eq:SiSj}). The second inequality follows from the induction hypothesis.
The equalities follow by definition of the {\sc yes}-efficacy score $\alpha^+$.
Thus $\alpha^+_i(\mathbb{S}\cup \mathbbm{\{i,j\}})\le \alpha^+_j(\mathbb{S}\cup \mathbbm{\{i,j\}})$ as desired.
\end{proof}

\begin{customlemma}{A.3}\label{lem:dominance-minus}
For any subset $S\subseteq [n]$ with $i,j\in S$, 
we have $\alpha^-_i(\mathbb{S}\setminus \mathbbm{i})\le \alpha_j(\mathbb{S}\setminus \mathbbm{j})$ 
and $\alpha^-_i(\mathbb{S}\setminus \mathbbm{\{i,j\}})\le \alpha_j(\mathbb{S}\setminus \mathbbm{\{i,j\}})$ 
whenever $j$ weakly dominates $i$.
\end{customlemma}
 \begin{proof}
 Apply a symmetric argument to that in the proof of Lemma~\ref{lem:dominance-plus}.
  \end{proof}

\restatetheorem{thm:dominance}
RM satisfies the dominance postulate.
\end{theorem}
\begin{proof}
Assume that $j$ weakly dominates $i$. We claim that 
$RM'^+_i\le RM'^+_j$.
\begin{align*}
RM^+_i(\mathcal{G}) 
&= \sum_{\mathbb{S}\in \mathcal{D}} \alpha^+_{i}(\mathbb{S})\cdot \mathbb{P(S)} \\
RM'^+_i(\mathcal{G}) 
&= \frac{1}{2^n}\cdot \sum_{\mathbb{S}\in \mathcal{D}} \alpha^+_{i}(\mathbb{S}) \\
&= \frac{1}{2^n}\cdot  \sum_{\mathbb{S}: i,j\notin S} \left(\alpha^+_{j}(\mathbb{S}\cup \mathbbm{j} ) + \alpha^+_{j}(\mathbb{S}\cup \mathbbm{\{i,j\}} ) \right) \\
&\le \frac{1}{2^n}\cdot  \sum_{\mathbb{S}: i,j\notin S} \left(\alpha^+_{i}(\mathbb{S}\cup \mathbbm{i} ) + \alpha^+_{i}(\mathbb{S}\cup \mathbbm{\{i,j\}} ) \right) \\
&= \frac{1}{2^n}\cdot \sum_{\mathbb{S}\in \mathcal{D}} \alpha^+_{j}(\mathbb{S}) \\
&= RM'^+_j(\mathcal{G}) 
\end{align*}
 Here the inequality holds by Lemma~\ref{lem:dominance-plus}.
A similar argument  applying Lemma~\ref{lem:dominance-minus} shows that $RM'^-_i\le RM'^-_j$.
Summing, we obtain that $RM'_i\le RM'_j$.
Thus {\sc (dom-1)} holds.

Next assume player $j$ strictly dominates player $i$. Then there exists a subset $S$
such that the division $\mathbb{S}\cup \mathbbm{i}$ is losing whilst $\mathbb{S}\cup \mathbbm{j}$ is winning.
Since $\mathbb{S}\cup \mathbbm{i}$ is losing we have, by definition, $\alpha^+_i(\mathbb{S}\cup \mathbbm{i})=0$.
On the other hand, since $\mathbb{S}\cup \mathbbm{j}$ is winning it follows that $\alpha^+_j(\mathbb{S}\cup \mathbbm{j})>0$.
Ergo, $\alpha^+_i(\mathbb{S}\cup \mathbbm{i})< \alpha^+_j(\mathbb{S}\cup \mathbbm{j})$.
Repeating the above argument for weak domination, the inequality now becomes strict.
Hence, $RM_i< RM_j$ and {\sc (dom-2)} holds.
\end{proof}

\subsection*{Proofs for Section~6}

Before proving that $RM$ satisfies the donation postulates (Theorems~\ref{thm:partial-donation-postulate} and \ref{thm:full-donation-postulate}),
we first show that after a partial or full donation the modified game $\hat{\mathcal{G}}$ retains monotonicity.
\begin{customlemma}{A.4}
The modified game $\hat{\mathcal{G}}$ is monotonic.
\end{customlemma}
\begin{proof}
We prove this for full donation; the case of a partial donation is similar.
Assume a violation of monotonicity is caused by division $\mathbb{S}\cup\mathbbm{i}$ winning in $\hat{\mathcal{G}}$ but 
losing in $\mathcal{G}$. By definition of $\hat{\mathcal{W}}$ this implies that $\mathbb{S}\cup\mathbbm{\{i,j\}}$ wins in $\mathcal{G}$.
Let the violation arise because $\mathbb{S}\cup\mathbbm{\{i,k\}}$ loses in $\hat{\mathcal{G}}$.
Now $k\neq j$, since $\mathbb{S}\cup\mathbbm{\{i,j\}}$ wins in $\hat{\mathcal{G}}$ given that it wins in $\mathcal{G}$.
But $\mathbb{S}\cup\mathbbm{\{i,k\}}$ has the same outcome in $\hat{\mathcal{G}}$ as $\mathbb{S}\cup\mathbbm{\{i,j,k\}}$ 
does in $\mathcal{G}$. However, given that $\mathbb{S}\cup\mathbbm{\{i,j\}}$ wins in $\mathcal{G}$, then by monotonicity so too 
does $\mathbb{S}\cup\mathbbm{\{i,j,k\}}$. Thus, by definition of a full donation, $\mathbb{S}\cup\mathbbm{\{i,k\}}$ wins in $\hat{\mathcal{G}}$, a contradiction.

On the other hand assume a violation of monotonicity is caused by $\mathbb{S}\cup\mathbbm{j}$ losing in $\hat{\mathcal{G}}$ 
but winning in $\mathcal{G}$.
Let the violation arise because $\mathbb{S}\cup\mathbbm{j\setminus k}$ wins in $\hat{\mathcal{G}}$. Thus $k\neq j$.
Since  $\mathbb{S}\cup\mathbbm{j}$ loses in $\hat{\mathcal{G}}$ this means $\mathbb{S}$ loses in both $\mathcal{G}$ and $\hat{\mathcal{G}}$.
But $\mathbb{S}\cup\mathbbm{j\setminus k}$ has the same outcome in $\hat{\mathcal{G}}$ as $\mathbb{S}\setminus \mathbbm{k}$  does in $\mathcal{G}$.
But, by monotonicity, $\mathbb{S}\setminus \mathbbm{k}$ loses in $\mathcal{G}$.
Thus, by definition of a full donation, $\mathbb{S}\cup\mathbbm{j\setminus k}$ loses in $\hat{\mathcal{G}}$, a contradiction.
\end{proof}

To prove that $RM$ satisfies the partial donation postulate, we need the following lemma.
\begin{customlemma}{A.5}\label{lem:partial-donation-plus}
Let player $j$ partially donate to player $i$. Then, for any $S\subseteq [n]$ with $i,j\notin S$, 
the efficacy scores of player $i$ satisfy:
\begin{align}
\hat{\alpha}^+_i(\mathbb{S}\cup \mathbbm{i}) 
	&\ge \alpha^+_i(\mathbb{S}\cup \mathbbm{i}) \tag{P1} \label{eq:P1}\\
\hat{\alpha}^+_i(\mathbb{S}\cup \mathbbm{\{i,j\}}) 
	&\ge \alpha^+_i(\mathbb{S}\cup \mathbbm{\{i,j\}})  \tag{P2}\label{eq:P2}
\end{align}
\end{customlemma}
\begin{proof}
We prove this by induction on the cardinality of $S$. For the base case, consider $S=\emptyset$.
If $\mathbbm{i}$ wins in $\hat{\mathcal{G}}$ then $i$ is {\sc yes}-decisive at $\mathbbm{i}$
and (\ref{eq:P1}) holds. So we may assume $\mathbbm{i}$ loses in $\hat{\mathcal{G}}$.
Then, by definition of a partial donation, $\mathbbm{i}$ also loses in $\hat{\mathcal{G}}$.
It follows that $\hat{\alpha}^+_i(\mathbb{S}\cup \mathbbm{\{i,j\}}) =0$ and  (\ref{eq:P1}) holds.
Similarly, (\ref{eq:P1}) holds if $\mathbbm{\{i,j\}}$ loses in $\hat{\mathcal{G}}$.
So assume $\mathbbm{\{i,j\}}$ wins in $\hat{\mathcal{G}}$. If $\mathbbm{\{j\}}$ loses in $\hat{\mathcal{G}}$ then
$i$ is {\sc yes}-decisive at $\{\mathbbm{i,j}\}$ in $\hat{\mathcal{G}}$
and so (\ref{eq:P2}) holds. On the other hand, if $\mathbbm{\{j\}}$ wins in $\hat{\mathcal{G}}$
then (\ref{eq:P2}) holds recursively since $\hat{\alpha}^+_i(\mathbbm{i}) 
\ge \alpha^+_i(\mathbbm{i})$ and $\hat{\alpha}^+_i(\mathbbm{j}) 
= \alpha^+_i(\mathbbm{j})=0$.

Next consider the induction step.
First let's show (\ref{eq:P1}).
Again, by definition of a partial donation, we may assume 
$\mathbb{S}\cup \mathbbm{i}$ is winning in $\hat{\mathcal{G}}$.
Furthermore, we may assume $\mathbb{S}$ is a winning division in $\hat{\mathcal{G}}$, otherwise $i$ is {\sc yes}-decisive 
at $\mathbb{S}\cup \mathbbm{i}$ and the result holds trivially. Thus,
\begin{eqnarray}
\hat{\alpha}^+_i(\mathbb{S}\cup \mathbbm{i}) 
&=& \frac{1}{|\hat{LC}(\mathbb{S}\cup \mathbbm{i})|}\cdot \left( \sum_{k\in S} \hat{\alpha}^+_i(\mathbb{S}\cup \mathbbm{i\setminus k}) 
	 +\hat{\alpha}^+_i(\mathbb{S}) \right) \nonumber \\
&=& \frac{1}{|\hat{LC}(\mathbb{S}\cup \mathbbm{i})|}\cdot \left( \sum_{k\in S} \hat{\alpha}^+_i(\mathbb{S}\cup \mathbbm{i\setminus k}) 
	 +0 \right) \nonumber 
\end{eqnarray} 
  
Let $k\in S_1\subseteq S$ if $\mathbb{S}\cup \mathbbm{i\setminus k}$ wins in both $\hat{\mathcal{G}}$ and $\mathcal{G}$.
Let $k\in S_2\subseteq S$ if $\mathbb{S}\cup \mathbbm{i\setminus k}$ wins in $\hat{\mathcal{G}}$ but loses 
in $\mathcal{G}$. 
Let $k\in S_3\subseteq S$ if $\mathbb{S}\cup \mathbbm{i\setminus k}$ loses in both $\hat{\mathcal{G}}$ and $\mathcal{G}$.
Recall, by definition of a partial donation, there does not exist $k\in S$ such that $\mathbb{S}\cup \mathbbm{i\setminus k}$ 
loses in $\hat{\mathcal{G}}$ but wins in $\mathcal{G}$. Consequently $S=S_1\cup S_2\cup S_3$.
Hence,
\begin{eqnarray}\label{eq:pdon1}
\hat{\alpha}^+_i(\mathbb{S}\cup \mathbbm{i}) 
&=& \frac{1}{|\hat{LC}(\mathbb{S}\cup\mathbbm{i})|}\cdot \left( \sum_{k\in S_1} \hat{\alpha}^+_i(\mathbb{S}\cup\mathbbm{i\setminus k}) 
	+\sum_{k\in S_2} \hat{\alpha}^+_i(\mathbb{S}\cup\mathbbm{i\setminus k}) \right) \nonumber \\
&=& \frac{1}{|\hat{LC}(\mathbb{S}\cup\mathbbm{i})|}\cdot \left( \sum_{k\in S_1} \hat{\alpha}^+_i(\mathbb{S}\cup\mathbbm{i\setminus k}) +|S_2|  \right) \nonumber \\
&\ge& \frac{1}{|\hat{LC}(\mathbb{S}\cup\mathbbm{i})|}\cdot \left( \sum_{k\in S_1} \alpha^+_j(\mathbb{S}\cup\mathbbm{i\setminus k}) +|S_2| \right) \nonumber \\
&=& \frac{1}{|S_1|+|S_2|+1}\cdot \left( \sum_{k\in S_1} \alpha^+_j(\mathbb{S}\cup\mathbbm{i\setminus k}) +|S_2| \right) \nonumber \\
 &\ge& \frac{1}{|S_1|+1}\cdot  \sum_{k\in S_1} \alpha^+_j(\mathbb{S}\cup\mathbbm{i\setminus k}) \nonumber \\
 &=& \alpha^+_i(\mathbb{S}\cup \mathbbm{i})
 \end{eqnarray} 
 To see the second equality, recall that for $k\in S_2$, 
 $\mathbb{S}\cup \mathbbm{i\setminus k}$ wins in $\hat{\mathcal{G}}$ but loses 
in $\mathcal{G}$.
 Thus, by monotonicity, $\mathbb{S}\setminus \mathbbm{k}$ loses in $\mathcal{G}$ and thus, by definition,
also loses in $\mathcal{G}$.
 Hence $i$ is {\sc yes}-decisive at $\mathbb{S}\cup\mathbbm{i\setminus k}$ 
in $\hat{\mathcal{G}}$. Consequently, $\hat{\alpha}^+_i(\mathbb{S}\cup\mathbbm{i\setminus k})=1$, for each $k\in S_2$.
The first inequality holds by the induction hypothesis. 
Thus (\ref{eq:P1}) holds.

Next consider (\ref{eq:P2}). 
If $i$ is {\sc yes}-decisive at $\mathbb{S}\cup \mathbbm{\{i,j\}}$ in $\hat{\mathcal{G}}$ then we are done.
So we may assume that both $\mathbb{S}\cup \mathbbm{\{i,j\}})$ and $\mathbb{S}\cup \mathbbm{j}$ win in $\hat{\mathcal{G}}$. 
Now
\begin{eqnarray}
\hat{\alpha}^+_i(\mathbb{S}\cup \mathbbm{\{i,j\}}) 
&=& \frac{1}{|\hat{LC}(\mathbb{S}\cup\mathbbm{\{i,j\}})|}\cdot  \sum_{k\in S\cup\{i,j\}} \hat{\alpha}^+_i(\mathbb{S}\cup \mathbbm{\{i,j\}\setminus k}) \nonumber \\
&=& \frac{1}{|\hat{LC}(\mathbb{S}\cup\mathbbm{\{i,j\}})|}\cdot \left( \sum_{k\in S} \hat{\alpha}^+_i(\mathbb{S}\cup \mathbbm{\{i,j\}\setminus k}) 
	+\hat{\alpha}^+_i(\mathbb{S}\cup \mathbbm{i})  +\hat{\alpha}^+_i(\mathbb{S}\cup \mathbbm{j}) \right) \nonumber 
 \end{eqnarray}
Recall that $\mathbb{S}\cup \mathbbm{\{i,j\}\setminus k}$ has the same outcome in $\hat{\mathcal{G}}$ and $\mathcal{G}$.
Furthermore, as $\mathbb{S}\cup \mathbbm{j}$ wins in $\hat{\mathcal{G}}$ it also wins in $\mathcal{G}$, by definition of a partial donation.
Thus we have two cases: either $|\hat{LC}(\mathbb{S}\cup\mathbbm{\{i,j\}})|=|LC(\mathbb{S}\cup\mathbbm{\{i,j\}})|$
and $\mathbb{S}\cup\mathbbm{i}$ has the same outcome in both $\hat{\mathcal{G}}$ and $\mathcal{G}$, 
or $|\hat{LC}(\mathbb{S}\cup\mathbbm{\{i,j\}})|=|LC(\mathbb{S}\cup\mathbbm{\{i,j\}})|+1$ and 
$\mathbb{S}\cup\mathbbm{i}$ wins in $\hat{\mathcal{G}}$ but loses in $\mathcal{G}$.
In the former case we have
 \begin{eqnarray}
\hat{\alpha}^+_i(\mathbb{S}\cup \mathbbm{\{i,j\}}) 
&=& \frac{1}{|LC(\mathbb{S}\cup\mathbbm{\{i,j\}})|}\cdot \left( \sum_{k\in S} \hat{\alpha}^+_i(\mathbb{S}\cup \mathbbm{\{i,j\}\setminus k}) 
	+\hat{\alpha}^+_i(\mathbb{S}\cup \mathbbm{i})  +\hat{\alpha}^+_i(\mathbb{S}\cup \mathbbm{j}) \right) \nonumber \\
&=& \frac{1}{|LC(\mathbb{S}\cup\mathbbm{\{i,j\}})|}\cdot \left( \sum_{k\in S} \hat{\alpha}^+_i(\mathbb{S}\cup \mathbbm{\{i,j\}\setminus k}) 
	+\hat{\alpha}^+_i(\mathbb{S}\cup \mathbbm{i})  +0 \right) \nonumber \\
&\ge& \frac{1}{|LC(\mathbb{S}\cup\mathbbm{\{i,j\}})|}\cdot \left( \sum_{k\in S} \alpha^+_i(\mathbb{S}\cup \mathbbm{\{i,j\}\setminus k}) 
	+\hat{\alpha}^+_i(\mathbb{S}\cup \mathbbm{i})  \right) \nonumber \\
&\ge& \frac{1}{|LC(\mathbb{S}\cup\mathbbm{\{i,j\}})|}\cdot \left( \sum_{k\in S} \alpha^+_i(\mathbb{S}\cup \mathbbm{\{i,j\}\setminus k}) 
+ \alpha^+_i(\mathbb{S}\cup \mathbbm{i}) \right) \nonumber\\
&=& \alpha^+_i(\mathbb{S}\cup \mathbbm{\{i,j\}}) \nonumber
 \end{eqnarray}
 Here the first inequality holds by the induction hypothesis. The second inequality follows from (\ref{eq:P1}). 
 In the latter case we have
 \begin{eqnarray}
\hat{\alpha}^+_i(\mathbb{S}\cup \mathbbm{\{i,j\}}) 
&=& \frac{1}{|LC(\mathbb{S}\cup\mathbbm{\{i,j\}})|+1}\cdot \left( \sum_{k\in S} \hat{\alpha}^+_i(\mathbb{S}\cup \mathbbm{\{i,j\}\setminus k}) 
	+\hat{\alpha}^+_i(\mathbb{S}\cup \mathbbm{i})  +\hat{\alpha}^+_i(\mathbb{S}\cup \mathbbm{j}) \right) \nonumber \\
&=& \frac{1}{|LC(\mathbb{S}\cup\mathbbm{\{i,j\}})|+1}\cdot \left( \sum_{k\in S} \hat{\alpha}^+_i(\mathbb{S}\cup \mathbbm{\{i,j\}\setminus k}) 
	+\hat{\alpha}^+_i(\mathbb{S}\cup \mathbbm{i})  +0 \right) \nonumber \\
&=& \frac{1}{|LC(\mathbb{S}\cup\mathbbm{\{i,j\}})|+1}\cdot \left( \sum_{k\in S} \hat{\alpha}^+_i(\mathbb{S}\cup \mathbbm{\{i,j\}\setminus k}) +1 \right) \nonumber  
\end{eqnarray}
To see the third equality, note that for this case $\mathbb{S}\cup\mathbbm{i}$ wins in $\hat{\mathcal{G}}$ but loses in $\mathcal{G}$.
But by definition, $\mathbb{S}$ has the same outcome in both $\hat{\mathcal{G}}$ and $\mathcal{G}$. Furthermore, by monotonicity,
 $\mathbb{S}$ loses in $\mathcal{G}$. Thus $i$ is {\sc yes}-decisive at $\mathbb{S}\cup \mathbbm{i}$ in $\hat{\mathcal{G}}$.
Hence $\hat{\alpha}^+_i(\mathbb{S}\cup \mathbbm{i})=1$.
So
 \begin{eqnarray}
\hat{\alpha}^+_i(\mathbb{S}\cup \mathbbm{\{i,j\}}) 	
&=& \frac{1}{|LC(\mathbb{S}\cup\mathbbm{\{i,j\}})|+1}\cdot \left( \sum_{k\in S} \hat{\alpha}^+_i(\mathbb{S}\cup \mathbbm{\{i,j\}\setminus k}) 
	+1 \right) \nonumber  \\
&=& \frac{1}{|LC(\mathbb{S}\cup\mathbbm{\{i,j\}})|+1}\cdot \left( \sum_{k\in S} \alpha^+_i(\mathbb{S}\cup \mathbbm{\{i,j\}\setminus k}) 
	+1 \right) \nonumber  \\
&\ge& \frac{1}{|LC(\mathbb{S}\cup\mathbbm{\{i,j\}})|}\cdot \sum_{k\in S} \alpha^+_i(\mathbb{S}\cup \mathbbm{\{i,j\}\setminus k})  \nonumber  \\
&=&   \alpha^+_i(\mathbb{S}\cup\mathbbm{\{i,j\}})  \nonumber   
\end{eqnarray}
 Here the second equality follows from the induction hypothesis.   
  \end{proof}
  
  \begin{customlemma}{A.6}\label{lem:partial-donation-minus}
Let player $j$ make a donation to player $i$. Then, for any $S\subseteq [n]$ with $i,j\in S$, 
the efficacy scores of player $i$ satisfies:
\begin{align*}
\hat{\alpha}^-_i(\mathbb{S}\setminus \mathbbm{\{i,j\}}) &\ge \alpha^-_i(\mathbb{S}\setminus \mathbbm{\{i,j\}})\\
\hat{\alpha}^-_i(\mathbb{S}\setminus \mathbbm{i}) &\ge \alpha^-_i(\mathbb{S}\setminus \mathbbm{i})
\end{align*}
\end{customlemma}
\begin{proof}
Apply a symmetric argument to that in the proof of Lemma~\ref{lem:partial-donation-plus}.
\end{proof}

\restatetheorem{thm:partial-donation-postulate}
$RM$ satisfies the partial-donation postulate.
\end{theorem}
\begin{proof}
We have
\begin{align*}
\hat{RM'}^+_i 
&=\frac{1}{2^n}\cdot\sum_{\mathbb{S}\in \hat{\mathcal{D}}} \hat{\alpha}^+_i(\mathbb{S})\\
&=\frac{1}{2^n}\cdot \sum_{\mathbb{S}\in \mathcal{D}} \hat{\alpha}^+_i(\mathbb{S})\\
 &=\frac{1}{2^n}\cdot \sum_{\mathbb{S}\in \mathcal{D}: i,j\notin S} \left( \hat{\alpha}^+_i(\mathbb{S}\cup \mathbbm{i})
 	+ \hat{\alpha}^+_i(\mathbb{S}\cup \mathbbm{\{i,j\}}) \right)\\
 &\ge \frac{1}{2^n}\cdot \sum_{\mathbb{S}\in \mathcal{D}: i,j\notin S} \left( \alpha^+_i(\mathbb{S}\cup \mathbbm{i})
 	+ \alpha^+_i(\mathbb{S}\cup \mathbbm{\{i,j\}}) \right)\\
 &= RM'^+_i
\end{align*}
Here the inequality holds by Lemma~\ref{lem:partial-donation-plus}.
A similar argument based upon Lemma~\ref{lem:partial-donation-minus} shows that $\hat{RM'}^-_i \ge RM'^-_i$.
It follows that $\hat{RM'}_i \ge RM'_i$.
\end{proof}

The main tool required to show that $RM$ satisfies the full-donation postulate is the following technical lemma.
\begin{customlemma}{A.7}\label{lem:donation-plus}
Let player $j$ donate to player $i$. Then, for any $S\subseteq [n]$ with $i,j\notin S$, 
the efficacy scores of player $i$ satisfy:
\begin{align}
\hat{\alpha}^+_i(\mathbb{S}\cup \mathbbm{i}) 
	&\ge \max \left( \alpha^+_i(\mathbb{S}\cup \mathbbm{i}),\alpha^+_j(\mathbb{S}\cup \mathbbm{j}) \right) \tag{C1} \label{eq:C1}\\
\hat{\alpha}^+_i(\mathbb{S}\cup \mathbbm{\{i,j\}}) 
	&\ge \max \left( \alpha^+_i(\mathbb{S}\cup \mathbbm{\{i,j\}}),\alpha^+_j(\mathbb{S}\cup \mathbbm{\{i,j\}}) \right) \tag{C2} \label{eq:C2}
\end{align}
\end{customlemma}
\begin{proof}
We prove this by induction on the cardinality of $S$. For the base case, consider $S=\emptyset$.
For (\ref{eq:C1}) observe that if either $\mathbbm{i}$ or $\mathbbm{j}$ wins in $\mathcal{G}$ then, by monotonicity, so does $\mathbbm{\{i,j\}}$.
Then, by definition, $\mathbbm{i}$ wins in $\hat{\mathcal{G}}$. But then $i$ is {\sc yes}-decisive at $\mathbbm{i}$ in $\hat{\mathcal{G}}$
and so (\ref{eq:C1}) holds.
Next, if $i$ is {\sc yes}-decisive at $\mathbbm{\{i,j\}}$ in $\hat{\mathcal{G}}$ then (\ref{eq:C2}) trivially holds.
We may hence assume that $\mathbbm{\{i,j\}}$ wins in $\hat{\mathcal{G}}$.
But  $\mathbbm{j}$ has the same outcome in $\hat{\mathcal{G}}$ as $(\emptyset, [n])$ does in $\mathcal{G}$, that is, it loses.
This means $i$ is {\sc yes}-decisive at $\mathbbm{\{i,j\}}$ in $\hat{\mathcal{G}}$ and 
so $\hat{\alpha}^+_i(\mathbb{S}\cup \mathbbm{\{i,j\}}) =1$ and  (\ref{eq:C2}) holds.

Next consider the induction step.
First we must show (\ref{eq:C1}) $\hat{\alpha}^+_i(\mathbb{S}\cup \mathbbm{i})
\ge \max \left( \alpha^+_i(\mathbb{S}\cup \mathbbm{i}),\alpha^+_j(\mathbb{S}\cup \mathbbm{j}) \right)$.
Observe that $\hat{\alpha}^+_i(\mathbb{S}\cup\mathbbm{i}) \ge \alpha^+_i(\mathbb{S}\cup\mathbbm{i})$ holds by 
Lemma~\ref{lem:partial-donation-plus}.
So it remains to show that $\hat{\alpha}^+_i(\mathbb{S}\cup\mathbbm{i}) \ge \alpha^+_j(\mathbb{S}\cup\mathbbm{i})$.
Recall that $\mathbb{S}\cup \mathbbm{i}$ wins in $\hat{\mathcal{G}}$ if and only if $\mathbb{S}\cup \mathbbm{\{i,j\}}$ wins in $\mathcal{G}$.
Thus if $\mathbb{S}\cup \mathbbm{i}$ is losing in $\hat{\mathcal{G}}$ then $\mathbb{S}\cup \mathbbm{\{i,j\}}$ is losing in $\mathcal{G}$.
Therefore, by monotonicity both $\mathbb{S}\cup \mathbbm{i}$ and $\mathbb{S}\cup \mathbbm{j}$ are losing in $\mathcal{G}$.
Thus $\hat{\alpha}^+_i(\mathbb{S}\cup \mathbbm{i})=0= \max \left( \alpha^+_i(\mathbb{S}\cup \mathbbm{i}),\alpha^+_j(\mathbb{S}\cup \mathbbm{j}) \right)$.
So we may assume that $\mathbb{S}\cup \mathbbm{i}$ is winning in $\hat{\mathcal{G}}$.
Furthermore, we may assume that $\mathbb{S}$ is a winning division in $\hat{\mathcal{G}}$, otherwise $i$ is {\sc yes}-decisive 
at $\mathbb{S}\cup \mathbbm{i}$ and the result holds trivially. Thus,
\begin{eqnarray}
\hat{\alpha}^+_i(\mathbb{S}\cup \mathbbm{i}) 
&=& \frac{1}{|\hat{LC}(\mathbb{S}\cup \mathbbm{i})|}\cdot \left( \sum_{k\in S} \hat{\alpha}^+_i(\mathbb{S}\cup \mathbbm{i\setminus k}) 
	 +\hat{\alpha}^+_i(\mathbb{S}) \right) \nonumber \\
&=& \frac{1}{|\hat{LC}(\mathbb{S}\cup \mathbbm{i})|}\cdot \left( \sum_{k\in S} \hat{\alpha}^+_i(\mathbb{S}\cup \mathbbm{i\setminus k}) 
	 +0 \right) \nonumber 
\end{eqnarray} 
  
Let $k\in S_1\subseteq S$ if $\mathbb{S}\cup \mathbbm{i\setminus k}$ and $\mathbb{S}\cup \mathbbm{j\setminus k}$ both win in the 
modified game $\hat{\mathcal{G}}$.
Let $k\in S_2\subseteq S$ if $\mathbb{S}\cup \mathbbm{i\setminus k}$ wins and $\mathbb{S}\cup \mathbbm{j\setminus k}$ loses 
in $\hat{\mathcal{G}}$. Let $k\in S_3\subseteq S$ if $\mathbb{S}\cup \mathbbm{i\setminus k}$ and $\mathbb{S}\cup \mathbbm{j\setminus k}$ 
both lose in $\hat{\mathcal{G}}$. Recall $\mathbb{S}\cup\mathbbm{i}$ has the same outcome in $\hat{\mathcal{G}}$ 
as $\mathbb{S}\cup\mathbbm{\{i,j\}}$ does in $\mathcal{G}$. Further, by monotonicity, $\mathbb{S}\cup\mathbbm{\{i,j\}}$ wins in $\mathcal{G}$ if 
either $\mathbb{S}\cup\mathbbm{i}$ or $\mathbb{S}\cup\mathbbm{j}$ does. Thus there does not exist $k\in S$ such that $\mathbb{S}\cup \mathbbm{j\setminus k}$ 
wins and $\mathbb{S}\cup \mathbbm{i\setminus k}$ loses in the modified game. Consequently $S=S_1\cup S_2\cup S_3$.
Hence,
\begin{eqnarray}\label{eq:don1}
\hat{\alpha}^+_i(\mathbb{S}\cup \mathbbm{i}) 
&=& \frac{1}{|\hat{LC}(\mathbb{S}\cup\mathbbm{i})|}\cdot \left( \sum_{k\in S_1} \hat{\alpha}^+_i(\mathbb{S}\cup\mathbbm{i\setminus k}) 
	+\sum_{k\in S_2} \hat{\alpha}^+_i(\mathbb{S}\cup\mathbbm{i\setminus k}) \right) \nonumber \\
&=& \frac{1}{|\hat{LC}(\mathbb{S}\cup\mathbbm{i})|}\cdot \left( \sum_{k\in S_1} \hat{\alpha}^+_i(\mathbb{S}\cup\mathbbm{i\setminus k}) +|S_2|  \right) \nonumber \\
&\ge& \frac{1}{|\hat{LC}(\mathbb{S}\cup\mathbbm{i})|}\cdot \left( \sum_{k\in S_1} \alpha^+_j(\mathbb{S}\cup\mathbbm{j\setminus k}) +|S_2| \right) \nonumber \\
&=& \frac{1}{|S_1|+|S_2|+1}\cdot \left( \sum_{k\in S_1} \alpha^+_j(\mathbb{S}\cup\mathbbm{j\setminus k}) +|S_2| \right) \nonumber \\
 &=& \frac{1}{|S_1|+|S_2|+1}\cdot \left( \sum_{k\in S_1} \alpha^+_j(\mathbb{S}\cup\mathbbm{j\setminus k}) +|S_2| + \alpha^+_j(\mathbb{S})  \right)
 \end{eqnarray} 
 To see the second equality, recall that for $k\in S_2$, $\mathbb{S}\cup\mathbbm{i\setminus k}$ wins and $\mathbb{S}\cup\mathbbm{j\setminus k}$ 
 loses in $\hat{\mathcal{G}}$. Thus, by monotonicity, $\mathbb{S}\setminus \mathbbm{k}$ also loses.
 Hence $i$ is {\sc yes}-decisive at $\mathbb{S}\cup\mathbbm{i\setminus k}$ 
in $\hat{\mathcal{G}}$. Hence $\hat{\alpha}^+_i(\mathbb{S}\cup\mathbbm{i\setminus k})=1$, for each $k\in S_2$.
The first inequality holds by the induction hypothesis. The last equality holds since $\alpha^+_j(\mathbb{S})=0$ given $j\notin S$.
From (\ref{eq:don1}) we then obtain
 \begin{eqnarray}\label{eq:don2}
\hat{\alpha}^+_i(\mathbb{S}\cup \mathbbm{i})    
 &\ge& \min \left(\frac{|S_2|}{|S_2|}, \frac{1}{|S_1|+1}\cdot \left( \sum_{k\in S_1} \alpha^+_j(\mathbb{S}\cup\mathbbm{j\setminus k}) 
 	+ \alpha^+_j(\mathbb{S}) \right)\right) \nonumber \\
 &=& \min \left(1, \frac{1}{\hat{LC}(\mathbb{S}\cup\mathbbm{j})}   \cdot \left( \sum_{k\in S_1} \alpha^+_j(\mathbb{S}\cup\mathbbm{j\setminus k}) 
 	+ \alpha^+_j(\mathbb{S}) \right) \right) \nonumber \\
 &\ge& \min \left(1, \frac{1}{LC(\mathbb{S}\cup\mathbbm{j})}   \cdot \left( \sum_{k\in S_1} \alpha^+_j(\mathbb{S}\cup\mathbbm{j\setminus k}) 
 	+ \alpha^+_j(\mathbb{S}) \right) \right) \nonumber \\
 &=& \frac{1}{LC(\mathbb{S}\cup\mathbbm{j})}\cdot \left( \sum_{k\in S_1} \alpha^+_j(\mathbb{S}\cup\mathbbm{j\setminus k}) 
 	+ \alpha^+_j(\mathbb{S}) \right) \nonumber \\
&=&   \alpha^+_j(\mathbb{S}\cup\mathbbm{j}) 
\end{eqnarray} 
Here the first inequality again follows from the
 mathematical fact that $\frac{A+B}{C+D}\ge \min\left(\frac{A}{C}, \frac{B}{D}\right)$ for positive numbers $A,B,C,D$.
The second inequality holds because $LC(\mathbb{S}\cup\mathbbm{j})\ge \hat{LC}(\mathbb{S}\cup\mathbbm{j})$.
This follows from the fact $\mathbb{S}\cup\mathbbm{j}$ has the same outcome in $\hat{\mathcal{G}}$ as $\mathbb{S}$ does in $\mathcal{G}$; 
so if $\mathbb{S}\cup\mathbbm{j \setminus k }$ is winning in $\hat{\mathcal{G}}$ then it must also win in $\mathcal{G}$.

Thus, we have $\hat{\alpha}^+_i(\mathbb{S}\cup \mathbbm{i} )\ge 
\max \left( \alpha^+_i(\mathbb{S}\cup \mathbbm{i}),\alpha^+_j(S\cup\{j\}) \right)$.
Ergo (\ref{eq:C1}) is satisfied.

Next consider (\ref{eq:C2}). 
We must prove $\hat{\alpha}^+_i(\mathbb{S}\cup \mathbbm{\{i,j\}}) \ge \max \left( \alpha^+_i(\mathbb{S}\cup \mathbbm{\{i,j\}}),\alpha^+_j(S\cup\{i,j\}) \right)$.
Observe that $\hat{\alpha}^+_i(\mathbb{S}\cup \mathbbm{\{i,j\}}) \ge \alpha^+_i(\mathbb{S}\cup \mathbbm{\{i,j\}})$ holds by 
Lemma~\ref{lem:partial-donation-plus}. So it remains to show that $\hat{\alpha}^+_i(\mathbb{S}\cup \mathbbm{\{i,j\}}) \ge \alpha^+_j(\mathbb{S}\cup \mathbbm{\{i,j\}})$.
If $i$ is {\sc yes}-decisive at $\mathbb{S}\cup \mathbbm{\{i,j\}}$ in $\hat{\mathcal{G}}$ then we are done.
Thus we may assume that both $\mathbb{S}\cup \mathbbm{\{i,j\}})$ and $\mathbb{S}\cup \mathbbm{j}$ win in $\hat{\mathcal{G}}$. But, by definition,
$\mathbb{S}\cup \mathbbm{j}$ and $\mathbb{S}$ have the same outcome in $\hat{\mathcal{G}}$. Thus $\mathbb{S}$ also wins in $\hat{\mathcal{G}}$.
Now
\begin{eqnarray}
\hat{\alpha}^+_i(\mathbb{S}\cup \mathbbm{\{i,j\}}) 
&=& \frac{1}{|\hat{LC}(\mathbb{S}\cup\mathbbm{\{i,j\}})|}\cdot  \sum_{k\in S\cup\{i,j\}} \hat{\alpha}^+_i(\mathbb{S}\cup \mathbbm{\{i,j\}\setminus k}) \nonumber \\
&=& \frac{1}{|\hat{LC}(\mathbb{S}\cup\mathbbm{\{i,j\}})|}\cdot \left( \sum_{k\in S} \hat{\alpha}^+_i(\mathbb{S}\cup \mathbbm{\{i,j\}\setminus k}) 
	+\hat{\alpha}^+_i(\mathbb{S}\cup \mathbbm{i})  +\hat{\alpha}^+_i(\mathbb{S}\cup \mathbbm{j}) \right) \nonumber 
 \end{eqnarray}
Recall that $\mathbb{S}\cup \mathbbm{j}$ has the same outcome in $\hat{\mathcal{G}}$ as $\mathbb{S}$ does in $\mathcal{G}$.
Thus we have two cases: either $|\hat{LC}(\mathbb{S}\cup\mathbbm{\{i,j\}})|=|LC(\mathbb{S}\cup\mathbbm{\{i,j\}})|$
and $\mathbb{S}\cup\mathbbm{i}$ has the same outcome in both $\hat{\mathcal{G}}$ and $\mathcal{G}$, 
or $|\hat{LC}(\mathbb{S}\cup\mathbbm{\{i,j\}})|=|LC(\mathbb{S}\cup\mathbbm{\{i,j\}})|+1$ and 
$\mathbb{S}\cup\mathbbm{i}$ wins in $\hat{\mathcal{G}}$ but loses in $\mathcal{G}$.
In the former case we have
 \begin{eqnarray}
\hat{\alpha}^+_i(\mathbb{S}\cup \mathbbm{\{i,j\}}) 
&=& \frac{1}{|LC(\mathbb{S}\cup\mathbbm{\{i,j\}})|}\cdot \left( \sum_{k\in S} \hat{\alpha}^+_i(\mathbb{S}\cup \mathbbm{\{i,j\}\setminus k}) 
	+\hat{\alpha}^+_i(\mathbb{S}\cup \mathbbm{i})  +\hat{\alpha}^+_i(\mathbb{S}\cup \mathbbm{j}) \right) \nonumber \\
&=& \frac{1}{|LC(\mathbb{S}\cup\mathbbm{\{i,j\}})|}\cdot \left( \sum_{k\in S} \hat{\alpha}^+_i(\mathbb{S}\cup \mathbbm{\{i,j\}\setminus k}) 
	+\hat{\alpha}^+_i(\mathbb{S}\cup \mathbbm{i})  +0 \right) \nonumber \\
&\ge& \frac{1}{|LC(\mathbb{S}\cup\mathbbm{\{i,j\}})|}\cdot \left( \sum_{k\in S} \alpha^+_j(\mathbb{S}\cup \mathbbm{\{i,j\}\setminus k}) 
	+\hat{\alpha}^+_i(\mathbb{S}\cup \mathbbm{i})  +0 \right) \nonumber \\
&\ge& \frac{1}{|LC(\mathbb{S}\cup\mathbbm{\{i,j\}})|}\cdot \left( \sum_{k\in S} \alpha^+_j(\mathbb{S}\cup \mathbbm{\{i,j\}\setminus k}) 
	+\max \left( \alpha^+_i(\mathbb{S}\cup \mathbbm{i}), \alpha^+_j(\mathbb{S}\cup \mathbbm{j})\right) \right) \nonumber\\
&\ge& \frac{1}{|LC(\mathbb{S}\cup\mathbbm{\{i,j\}})|}\cdot \left( \sum_{k\in S} \alpha^+_j(\mathbb{S}\cup \mathbbm{\{i,j\}\setminus k}) 
	+\alpha^+_j(\mathbb{S}\cup \mathbbm{j})\right) \nonumber \\
&=& \alpha^+_j(\mathbb{S}\cup \mathbbm{\{i,j\}}) \nonumber
 \end{eqnarray}
 Here the first inequality holds by the induction hypothesis. The second inequality follows from (\ref{eq:C1}). 
 In the latter case we have
 \begin{eqnarray}
\hat{\alpha}^+_i(\mathbb{S}\cup \mathbbm{\{i,j\}}) 
&=& \frac{1}{|LC(\mathbb{S}\cup\mathbbm{\{i,j\}})|+1}\cdot \left( \sum_{k\in S} \hat{\alpha}^+_i(\mathbb{S}\cup \mathbbm{\{i,j\}\setminus k}) 
	+\hat{\alpha}^+_i(\mathbb{S}\cup \mathbbm{i})  +\hat{\alpha}^+_i(\mathbb{S}\cup \mathbbm{j}) \right) \nonumber \\
&=& \frac{1}{|LC(\mathbb{S}\cup\mathbbm{\{i,j\}})|+1}\cdot \left( \sum_{k\in S} \hat{\alpha}^+_i(\mathbb{S}\cup \mathbbm{\{i,j\}\setminus k}) 
	+\hat{\alpha}^+_i(\mathbb{S}\cup \mathbbm{i})  +0 \right) \nonumber \\
&=& \frac{1}{|LC(\mathbb{S}\cup\mathbbm{\{i,j\}})|+1}\cdot \left( \sum_{k\in S} \hat{\alpha}^+_i(\mathbb{S}\cup \mathbbm{\{i,j\}\setminus k}) +1 \right) \nonumber  
\end{eqnarray}
To see the third equality, note that for this case $\mathbb{S}\cup\mathbbm{i}$ wins in $\hat{\mathcal{G}}$ but loses in $\mathcal{G}$.
But by assumption, $\mathbb{S}$ has the same outcome in both $\hat{\mathcal{G}}$ and $\mathcal{G}$. Furthermore, by monotonicity,
 $\mathbb{S}$ loses in $\mathcal{G}$. Thus $i$ is {\sc yes}-decisive at $\mathbb{S}\cup \mathbbm{i}$ in $\hat{\mathcal{G}}$.
Hence $\hat{\alpha}^+_i(\mathbb{S}\cup \mathbbm{i})=1$.
So
 \begin{eqnarray}
\hat{\alpha}^+_i(\mathbb{S}\cup \mathbbm{\{i,j\}}) 	
&=& \frac{1}{|LC(\mathbb{S}\cup\mathbbm{\{i,j\}})|+1}\cdot \left( \sum_{k\in S} \hat{\alpha}^+_i(\mathbb{S}\cup \mathbbm{\{i,j\}\setminus k}) 
	+1 \right) \nonumber  \\
&\ge& \frac{1}{|LC(\mathbb{S}\cup\mathbbm{\{i,j\}})|+1}\cdot \left( \sum_{k\in S} \alpha^+_j(\mathbb{S}\cup \mathbbm{\{i,j\}\setminus k}) 
	+1 \right) \nonumber  \\
&\ge& \min\left( \frac{\sum_{k\in S} \alpha^+_j(\mathbb{S}\cup \mathbbm{\{i,j\}\setminus k}) }{|LC(\mathbb{S}\cup\mathbbm{\{i,j\}})|},
	 \frac{1}{1} \right) \nonumber  \\	
&=& \min\left(  \alpha^+_j(\mathbb{S}\cup\mathbbm{\{i,j\}}) ,
	 1 \right) \nonumber  \\	
&=&   \alpha^+_j(\mathbb{S}\cup\mathbbm{\{i,j\}})  \nonumber   
\end{eqnarray}
 Here the first inequality follows from the induction hypothesis. This completes the proof.
  \end{proof}
  
  \begin{customlemma}{A.8}\label{lem:donation-minus}
Let player $j$ make a donation to player $i$. Then, for any $S\subseteq [n]$ with $i,j\in S$, 
the efficacy scores of player $i$ satisfies:
\begin{align*}
\hat{\alpha}^-_i(\mathbb{S}\setminus \mathbbm{\{i,j\}}) &\ge \max \left( \alpha^-_i(\mathbb{S}\setminus \mathbbm{\{i,j\}}),\alpha^-_j(\mathbb{S}\setminus \mathbbm{\{i,j\}}) \right)\\
\hat{\alpha}^-_i(\mathbb{S}\setminus \mathbbm{i}) &\ge \max \left( \alpha^-_i(\mathbb{S}\setminus \mathbbm{i}),\alpha^-_j(\mathbb{S}\setminus \mathbbm{j}) \right)
\end{align*}
\end{customlemma}
\begin{proof}
Apply a symmetric argument to that in the proof of Lemma~\ref{lem:donation-plus}.
\end{proof}

\restatetheorem{thm:full-donation-postulate}
$RM$ satisfies the full-donation postulate.
\end{theorem}
\begin{proof}
We have
\begin{align*}
\hat{RM'}^+_i 
&=\frac{1}{2^n}\cdot\sum_{\mathbb{S}\in \hat{\mathcal{D}}} \hat{\alpha}^+_i(\mathbb{S})\\
&=\frac{1}{2^n}\cdot \sum_{\mathbb{S}\in \mathcal{D}} \hat{\alpha}^+_i(\mathbb{S})\\
 &=\frac{1}{2^n}\cdot \sum_{\mathbb{S}\in \mathcal{D}: i,j\notin S} \left( \hat{\alpha}^+_i(\mathbb{S}\cup \mathbbm{i})
 	+ \hat{\alpha}^+_i(\mathbb{S}\cup \mathbbm{\{i,j\}}) \right)\\
 &\ge \frac{1}{2^n}\cdot \sum_{\mathbb{S}\in \mathcal{D}: i,j\notin S} \left( \max\{ \alpha^+_i(\mathbb{S}\cup \mathbbm{i}),  \alpha^+_j(\mathbb{S}\cup \mathbbm{j})\} 
 	+ \max\{ \alpha^+_i(\mathbb{S}\cup \mathbbm{\{i,j\}}), \alpha^+_j(\mathbb{S}\cup \mathbbm{\{i,j\}}) \} \right)\\
  &\ge \frac{1}{2^n}\cdot \sum_{\mathbb{S}\in \mathcal{D}: i,j\notin S} \left( \max\{ \alpha^+_i(\mathbb{S}\cup \mathbbm{i})
  	+  \alpha^+_i(\mathbb{S}\cup \mathbbm{\{i,j\}}) ,\alpha^+_j(\mathbb{S}\cup \mathbbm{j})+ \alpha^+_j(\mathbb{S}\cup \mathbbm{\{i,j\}}) \} \right)\\
 &= \frac{1}{2^n}\cdot \max \{ \sum_{\mathbb{S}\in \mathcal{D}: i,j\notin S} \alpha^+_i(\mathbb{S}\cup \mathbbm{i})
 	+  \alpha^+_i(\mathbb{S}\cup \mathbbm{\{i,j\}}),  \sum_{\mathbb{S}\in \mathcal{D}: i,j\notin S}  \alpha^+_j(\mathbb{S}\cup \mathbbm{j})+ \alpha_j(\mathbb{S}\cup \mathbbm{\{i,j\}}) \} \\
&= \max \{RM'^+_i, RM'^+_j \}
\end{align*}
Here the first inequality holds by Lemma~\ref{lem:donation-plus}.
The second inequality holds because $\max (A_1,B_1) + \max (A_2,B_2 ) \ge A_1+A_2$ and  $\max (A_1,B_1) + \max (A_2,B_2 ) \ge B_1+B_2$.
A similar argument based upon Lemma~\ref{lem:donation-minus} shows that $\hat{RM'}^-_i \ge \max \{RM'^-_i, RM'^-_j\}$. It follows that 
$\hat{RM'}_i \ge \max \{RM'_i, RM'_j\}$.
\end{proof}

\subsection*{Proofs for Section~7}

\restatetheorem{thm:bloc-postulate}
$RM$ satisfies the minimum-power bloc postulate.
\end{theorem}
\begin{proof}
The annexation of player $j$ by player $i$ can be viewed as a two-step process.
First, $j$'s vote is donated to $i$ to give a game $\bar{\mathcal{G}}$. 
By Theorem~\ref{thm:full-donation-postulate}, $\hat{RM'}_i\ge \max (RM'_i, RM'_j)$.
Second, $j$ is removed from the game to give a game $\hat{\mathcal{G}}$.
The key observation here is that after $j$ donates to $i$, now called $I$, then $j$ becomes a dummy voter in 
$\bar{\mathcal{G}}$. Thus, by Theorem~\ref{thm:dummy-postulate}, $\hat{RM'}_I\ge \hat{RM'_i}$.
Combining these two inequalities we obtain $\hat{RM'}_I\ge \max (RM'_i, RM'_j)$.
So the minimum-power bloc postulate is satisfied.
\end{proof}

\subsection*{Proofs for Section~8}

To prove that $RM$ satisfies the quarrel postulate, we must recall that the conception of quarrelling on which we base our postulate
preserves monotonicity in the derived game $\hat{\mathcal{G}}$ in which $i$ and $j$ quarrel.
\restatetheorem{thm:quarrel-monotonicity}
The modified game $\hat{\mathcal{G}}$ is monotonic.
\end{theorem}
\noindent The proof of this lemma is provided in \citet{AbiV22a}.
Next we need the following two lemmas.

\begin{customlemma}{A.9}\label{lem:quarrel-plus}
For any division $\mathbb{S}$ with $i,j\notin S$, the efficacy scores of player $i$ satisfy:
\begin{align}
\hat{\alpha}^+_i(\mathbb{S}) &= \alpha^+_i(\mathbb{S})  \tag{D1}\label{eq:D1} \\ 
\hat{\alpha}^+_i(\mathbb{S}\cup\mathbbm{j}) &= \alpha^+_i(\mathbb{S}\cup\mathbbm{j}) \tag{D2}\label{eq:D2} \\
\hat{\alpha}^+_i(\mathbb{S}\cup\mathbbm{i}) &\le \alpha^+_i(\mathbb{S}\cup\mathbbm{i})\tag{D3}\label{eq:D3}  \\
\hat{\alpha}^+_i(\mathbb{S}\cup\mathbbm{\{i,j\}} ) &\le \alpha^+_i(\mathbb{S}\cup\mathbbm{\{i,j\}}) \tag{D4}\label{eq:D4} 
\end{align}
\end{customlemma}
\begin{proof}
Take any $S$ not containing $i$ or $j$. 
Since $i\notin S$, we have that $\hat{\alpha}^+_i(\mathbb{S}) = 0 = \alpha^+_i(\mathbb{S})$ 
and $\hat{\alpha}^+_i(\mathbb{S}\cup\mathbbm{j}) = 0= \alpha^+_i(\mathbb{S}\cup\mathbbm{j})$.
So (\ref{eq:D1}) and (\ref{eq:D2}) hold.

We now proceed by induction on $|S|$. The base case is $S=\emptyset$.
Now if $\mathbbm{i}$ wins then ${\alpha}^+_i(\mathbbm{\emptyset\cup \{i\}})=1\ge \hat{\alpha}^+_i(\mathbbm{\emptyset \cup \{i,j\}})$ as required by (\ref{eq:D3}).
If $\mathbbm{\{i,j\}}$ loses in $\hat{\mathcal{G}}$ then ${\alpha}^+_i(\mathbbm{\emptyset\cup \{i\}})\ge \hat{\alpha}^+_i(\mathbbm{\emptyset \cup \{i,j\}}) =0$.
So we may assume $\mathbbm{\{i,j\}}$ wins in $\hat{\mathcal{G}}$ and thus also in $\mathcal{G}$.
Furthermore, we may assume $\mathbbm{j}$ wins in $\mathcal{G}$, otherwise $i$ is {\sc yes}-decisive at $\mathbbm{\{i,j\}}$ in $\mathcal{G}$
and (\ref{eq:D4}) holds trivially.
We then have two possibilities: $\mathbbm{i}$
either wins in both $\mathcal{G}$ and $\hat{\mathcal{G}}$
or loses in both $\mathcal{G}$ and $\hat{\mathcal{G}}$.
In either case the sub-lattices below $\mathbbm{\{i,j\}}$ are identical for $\mathcal{D}$ and $\hat{\mathcal{D}}$ 
and thus $\hat{\alpha}^+_i(\mathbb{S}\cup\mathbbm{\{i,j\}} ) =\alpha^+_i(\mathbb{S}\cup\mathbbm{\{i,j\}})$. Hence (\ref{eq:D4}) also 
holds for the base case.

For the induction step, let's begin with showing (\ref{eq:D3}).
Now $\mathbb{S}\cup\mathbbm{i}$ wins in $\hat{\mathcal{G}}$; otherwise $\hat{\alpha}^+_i(\mathbb{S}\cup\mathbbm{i})=0$ and we are trivially done.
Thus $\mathbb{S}\cup\mathbbm{i}$ also wins in $\mathcal{G}$; this implies, by definition of $\hat{\mathcal{G}}$, that $\mathbb{S}$ has the same outcome in both
$\mathcal{G}$ and $\hat{\mathcal{G}}$. If $\mathbb{S}$ loses then $i$ is {\sc yes}-decisive at $\mathbbm{\{i,j\}}$ in both games and 
$\hat{\alpha}^+_i(\mathbb{S}\cup\mathbbm{i}) =1= \alpha^+_i(\mathbb{S}\cup\mathbbm{i})$.
Thus we may assume $\mathbb{S}$ wins and  $\hat{\alpha}^+_i(\mathbb{S}\cup\mathbbm{i})$.
Now, $\alpha^+_i(\mathbb{S}\cup\mathbbm{i})$ is defined recursively. 
For any child $\mathbb{S}\cup\mathbbm{i \setminus k }$ of $\mathbb{S}\cup\mathbbm{i}$, $i$ but not $j$ is a {\sc yes}-voter;
thus by definition of $\hat{\mathcal{G}}$, the division outcome is the same in both $\mathcal{G}$ and $\hat{\mathcal{G}}$.
Thus the set of loyal children of $\mathbb{S}\cup\mathbbm{i}$ are also identical in both $\mathcal{G}$ and $\hat{\mathcal{G}}$.
It follows that
 \begin{eqnarray}\label{eq:quar1}
\alpha^+_i(\mathbb{S}\cup\mathbbm{i}) 
&=& \frac{1}{|LC(\mathbb{S}\cup\mathbbm{i})|}\cdot \left( \sum_{k\in S} \alpha^+_i(\mathbb{S}\cup\mathbbm{i\setminus k}) 
	+ \alpha^+_i(\mathbb{S})  \right) \nonumber \\
&=& \frac{1}{|LC(\mathbb{S}\cup\mathbbm{i})|}\cdot \left( \sum_{k\in S} \alpha^+_i(\mathbb{S}\cup\mathbbm{i\setminus k}) 
	+ 0)  \right) \nonumber \\
&=& \frac{1}{|\hat{LC}(\mathbb{S}\cup\mathbbm{i})|}\cdot \left( \sum_{k\in S} \alpha^+_i(\mathbb{S}\cup\mathbbm{i\setminus k}) +0 \right) \nonumber \\
&=& \frac{1}{|\hat{LC}(\mathbb{S}\cup\mathbbm{i})|}\cdot \left( \sum_{k\in S} \alpha^+_i(\mathbb{S}\cup\mathbbm{i\setminus k}) 
	+ \hat{\alpha}^+_i(\mathbb{S}) \right) \nonumber \\
&\ge& \frac{1}{|\hat{LC}(\mathbb{S}\cup\mathbbm{i})|}\cdot \left( \sum_{k\in S} \hat{\alpha}^+_i(\mathbb{S}\cup\mathbbm{i\setminus k}) 
	+ \hat{\alpha}^+_i(\mathbb{S}) \right) \nonumber \\
&=& \hat{\alpha}^+_i(\mathbb{S}\cup\mathbbm{i}) 
 \end{eqnarray}
Here the inequality holds by the induction hypothesis. Thus $\alpha^+_i(\mathbb{S}\cup\mathbbm{i}) \ge \hat{\alpha}^+_i(\mathbb{S}\cup\mathbbm{i})$ 
and (\ref{eq:D3}) holds.

Next consider $\mathbb{S}\cup\mathbbm{\{i,j\}}$. Again, we may assume $\mathbb{S}\cup\mathbbm{\{i,j\}}$ wins in $\hat{\mathcal{G}}$ and 
thus also in $\mathcal{G}$. In addition, we may assume that $\mathbb{S}\cup\mathbbm{j}$ wins in $\mathcal{G}$ and thus also in $\hat{\mathcal{G}}$;
otherwise $i$ is {\sc yes}-decisive at $\mathbb{S}\cup\mathbbm{\{i,j\}}$ in $\mathcal{G}$ and we are done.


Let $k\in S_1\subseteq S$ if $\mathbb{S}\cup\mathbbm{\{i,j\}\setminus k}$ wins in both $\mathcal{G}$ and $\hat{\mathcal{G}}$.
Let $k\in S_2\subseteq S$ if $\mathbb{S}\cup\mathbbm{\{i,j\}\setminus k}$ wins in $\mathcal{G}$ but loses in $\hat{\mathcal{G}}$.
Let $k\in S_3\subseteq S$ if $\mathbb{S}\cup\mathbbm{\{i,j\}\setminus k}$  loses in both $\mathcal{G}$ and $\hat{\mathcal{G}}$.
Recall that, by definition of $\hat{\mathcal{G}}$, if $\mathbb{T}$ loses in $\mathcal{G}$ but wins in $\hat{\mathcal{G}}$ then $T$ contains neither $i$ nor $j$.
Thus, there does not exist $k\in S$ such that $\mathbb{S}\cup\mathbbm{\{i,j\}\setminus k}$  loses in $\mathcal{G}$  but wins in $\hat{\mathcal{G}}$. So
 \begin{eqnarray}
\alpha^+_i(\mathbb{S}\cup\mathbbm{\{i,j\}}) 
&=& \frac{1}{|LC(\mathbb{S}\cup\mathbbm{\{i,j\}})|}\cdot \left( \sum_{k\in S} \alpha^+_i(\mathbb{S}\cup\mathbbm{\{i,j\}\setminus k}) 
	+ \alpha^+_i(\mathbb{S}\cup\mathbbm{i})  + \alpha^+_i(\mathbb{S}\cup\mathbbm{j}) \right) \nonumber \\
&=& \frac{1}{|LC(\mathbb{S}\cup\mathbbm{\{i,j\}})|}\cdot \left( \sum_{k\in S} \alpha^+_i(\mathbb{S}\cup\mathbbm{\{i,j\}\setminus k}) 
	+ \alpha^+_i(\mathbb{S}\cup\mathbbm{i})  +0 \right) \nonumber \\
&=& \frac{1}{|LC(\mathbb{S}\cup\mathbbm{\{i,j\}})|}\cdot \left( \sum_{k\in S_1} \alpha^+_i(\mathbb{S}\cup\mathbbm{\{i,j\}\setminus k}) +\sum_{k\in S_2} \alpha^+_i(\mathbb{S}\cup\mathbbm{\{i,j\}\setminus k}) 
	+\alpha^+_i(\mathbb{S}\cup\mathbbm{i}) \right) \nonumber 
 \end{eqnarray}
For $k\in S_2$ we have that $\mathbb{S}\cup\mathbbm{\{i,j\}\setminus k}$ wins in $\mathcal{G}$  but loses in $\hat{\mathcal{G}}$.
But, by definition of $\hat{\mathcal{G}}$, this implies that both $\mathbb{S}\cup\mathbbm{i\setminus k}$ and $\mathbb{S}\cup\mathbbm{j\setminus k}$ 
lose in the original game $\mathcal{G}$.
In particular, $i$ is decisive at $\mathbb{S}\cup\mathbbm{\{i,j\}\setminus k}$ in the original game; consequently, 
$\alpha^+_i(\mathbb{S}\cup\mathbbm{\{i,j\}\setminus k})=1$.
Hence
\begin{eqnarray}
\alpha^+_i(\mathbb{S}\cup\mathbbm{\{i,j\}}) 
&=& \frac{1}{|LC(\mathbb{S}\cup\mathbbm{\{i,j\}})|}\cdot \left( \sum_{k\in S_1} \alpha^+_i(\mathbb{S}\cup\mathbbm{\{i,j\}\setminus k}) +|S_2|
	+\alpha^+_i(\mathbb{S}\cup\mathbbm{i}) \right) \nonumber \\
&=& \frac{1}{|S_1|+|S_2|+2}\cdot \left( \sum_{k\in S_1} \alpha^+_i(\mathbb{S}\cup\mathbbm{\{i,j\}\setminus k}) +|S_2|
	+\alpha^+_i(\mathbb{S}\cup\mathbbm{i}) \right) \nonumber \\
&\ge& \frac{1}{|S_1|+|S_2|+2}\cdot \left( \sum_{k\in S_1} \alpha^+_i(\mathbb{S}\cup\mathbbm{\{i,j\}\setminus k}) +|S_2|
	+\hat{\alpha}^+_i(\mathbb{S}\cup\mathbbm{i}) \right) \nonumber \\
&\ge& \frac{1}{|S_1|+|S_2|+2}\cdot \left( \sum_{k\in S_1} \hat{\alpha}^+_i(\mathbb{S}\cup\mathbbm{\{i,j\}\setminus k}) +|S_2|
	+\hat{\alpha}^+_i(\mathbb{S}\cup\mathbbm{i}) \right) \nonumber \\
&=& \frac{1}{|S_1|+|S_2|+2}\cdot \left( \sum_{k\in S_1} \hat{\alpha}^+_i(\mathbb{S}\cup\mathbbm{\{i,j\}\setminus k}) +|S_2|
	+\hat{\alpha}^+_i(\mathbb{S}\cup\mathbbm{i}) +\hat{\alpha}^+_i(\mathbb{S}\cup\mathbbm{j}) \right) \nonumber \\
&\ge& \min \left(\frac{|S_2|}{|S_2|}, \frac{1}{|S_1|+2}\cdot \left( \sum_{k\in S_1} \hat{\alpha}^+_i(\mathbb{S}\cup\mathbbm{\{i,j\}\setminus k}) 
	+ \hat{\alpha}^+_i(\mathbb{S}\cup\mathbbm{i})+\hat{\alpha}^+_i(\mathbb{S}\cup\mathbbm{j}) \right) \right) \nonumber \\
&=& \min \left(1, \frac{1}{\hat{LC}(\mathbb{S}\cup\mathbbm{\{i,j\}})}\cdot \left( \sum_{k\in S_1} \hat{\alpha}^+_i(\mathbb{S}\cup\mathbbm{\{i,j\}\setminus k}) 
      	+ \hat{\alpha}^+_i(\mathbb{S}\cup\mathbbm{i})+ \hat{\alpha}^+_i(\mathbb{S}\cup\mathbbm{j}) \right)  \right) \nonumber \\
&=&    \hat{\alpha}^+_i(\mathbb{S}\cup\mathbbm{\{i,j\}})   \nonumber
 \end{eqnarray}
 Here the first inequality holds by (\ref{eq:quar1}). The second inequality holds by the induction hypothesis.
 The fourth equality holds since, as we previously argued, $\mathbb{S}\cup\mathbbm{j}$ must be a loyal child.
 It follows that $\hat{\alpha}^+_i(\mathbb{S}\cup\mathbbm{\{i,j\}} ) \le \alpha^+_i(\mathbb{S}\cup\mathbbm{\{i,j\}})$, completing the proof of (\ref{eq:D4}).
\end{proof}

\begin{customlemma}{A.10}\label{lem:quarrel-minus}
Then, for any $S\subseteq [n]$ with $i,j\notin S$, the efficacy scores of player $i$ satisfy:
\begin{align*}
\hat{\alpha}^-_i(\mathbb{S}) &\le \alpha^-_i(\mathbb{S}) \\ 
\hat{\alpha}^-_i(\mathbb{S}\cup\mathbbm{i}) &\le \alpha^-_i(\mathbb{S}\cup\mathbbm{i}) \\
\hat{\alpha}^-_i(\mathbb{S}\cup\mathbbm{j}) &= \alpha^-_i(\mathbb{S}\cup\mathbbm{j}) \\
\hat{\alpha}^-_i(\mathbb{S}\cup\mathbbm{\{i,j\}} ) &= \alpha^-_i(\mathbb{S}\cup\mathbbm{\{i,j\}}) 
\end{align*}
\end{customlemma}
\begin{proof}
Apply a symmetric argument to that in the proof of Lemma~\ref{lem:quarrel-minus}.
\end{proof}

\restatetheorem{thm:quarrel-postulate}
RM satisfies the standard quarrel postulate.
\end{theorem}
\begin{proof}
We have
\begin{equation*}
\hat{RM}_i 
\ =\ \sum_{S\in \hat{\mathcal{D}}} \hat{\alpha}_i(\mathbb{S})  \cdot \hat{\mathbb{P}}(\mathbb{S})
\ =\ \sum_{S\in \mathcal{D}} \hat{\alpha}_i(\mathbb{S})\cdot \mathbb{P}(\mathbb{S})
\ \le\ \sum_{S\in \mathcal{D}} \alpha_i(\mathbb{S})\cdot \mathbb{P}(\mathbb{S})
\ =\  RM_i
\end{equation*}
Here the first inequality holds by Lemma~\ref{lem:quarrel-plus} and Lemma~\ref{lem:quarrel-minus}.
Similarly $RM_i\ge \hat{RM}_i$.
Thus $RM$ satisfies {\sc (quar-1)} and {\sc (quar-2)} for any voting-independent probability distribution.
It follows that it satisfies them for equiprobable divisions, i.e., for $RM'$.
\end{proof}

\end{document}